\DeclareMathAlphabet{\mathcal}{OMS}{cmsy}{m}{n}
\theoremstyle{plain} 
\newtheorem{theorem}{Theorem}[section]
\newtheorem{lemma}[theorem]{Lemma}
\theoremstyle{definition} 
\newtheorem{definition}{Definition}[section]
\newtheorem{assumption}{Assumption}
\theoremstyle{remark} 
\newtheorem{remark}{Remark}
\newcommand{\svs}{\vspace{0.7mm}}
\newcommand{\vs}{\vspace{1.5mm}}
\newcommand{\G}{\mathbb{G}}
\newcommand{\Z}{\mathbb{Z}}
\newcommand{\bits}{\{0,1\}}
\newcommand{\mc}[1]{\mathcal{#1}}
\newcommand{\tb}[1]{\textbf{#1}}
\newcommand{\ts}[1]{\textsf{#1}}
\newcommand{\lb}{\linebreak[0]}
\newcommand{\Adv}{\textbf{Adv}}
\title{Adaptive Chosen-Ciphertext Security of\\ Distributed Broadcast Encryption}
\author{
    Kwangsu Lee\footnote{Sejong University, Seoul, Korea.
        Email: \texttt{kwangsu@sejong.ac.kr}.}
}
\date{}
\begin{document}

\maketitle

\begin{abstract}
Distributed broadcast encryption (DBE) is a specific kind of broadcast
encryption (BE) where users independently generate their own public and
private keys, and a sender can efficiently create a ciphertext for a subset
of users by using the public keys of the subset users. Previously proposed
DBE schemes have been proven in the adaptive chosen-plaintext attack (CPA)
security model and have the disadvantage of requiring linear number of
pairing operations when verifying the public key of a user. In this paper,
we propose an efficient DBE scheme in bilinear groups and prove adaptive
chosen-ciphertext attack (CCA) security for the first time. To do this, we
first propose a semi-static CCA secure DBE scheme and prove the security
under the $q$-Type assumption. Then, by modifying the generic
transformation of Gentry and Waters that converts a semi-static CPA secure
DBE scheme into an adaptive CPA secure DBE scheme to be applied to CCA
secure DBE schemes, we propose an adaptive CCA secure DBE scheme and prove
its adaptive CCA security. Our proposed DBE scheme is efficient because it
requires constant size ciphertexts, constant size private keys, and linear
size public keys, and the public key verification requires only a constant
number of pairing operations and efficient group membership checks.
\end{abstract}

\vs \noindent {\bf Keywords:} Distributed broadcast encryption,
Chosen-ciphertext security, Adaptive security, Bilinear maps.

\newpage

\section{Introduction}

Broadcast encryption (BE) is an encryption method that efficiently transmits
an encrypted message to a subset of users, and the concept of BE was first
introduced by Fiat and Naor \cite{FiatN93}. In a general public-key BE
scheme, the private keys of users are generated by a trusted central
authority and transmitted securely to users. Afterwards, a sender creates a
ciphertext for the subset of recipients by using the public parameters, and a
recipient can decrypt the ciphertext with his or her private key if his or
her index is included in the subset. By combining the ciphertexts of
public-key encryption (PKE), it is possible to design a trivial BE scheme
with a ciphertext size that increases linearly with the size of the recipient
subset. Thus, an efficient BE scheme requires that the ciphertext has a
sub-linear size. The most efficient public-key BE scheme is the BGW-BE scheme
designed by Boneh et al. \cite{BonehGW05} in bilinear groups with constant
size ciphertexts and constant size private keys, which provides static
security under the $q$-Type assumption. Since then, various BE schemes have
been proposed in bilinear groups, multilinear maps, indistinguishability
obfuscation, and lattices \cite{Delerablee07,GentryW09,Waters09, BonehWZ14,
BonehZ14m,GayKW18,AgrawalY20,Wee22}.

Traditional PKE schemes do not require a trusted center for private key
generation because individual users independently generate public and private
keys. Since BE schemes require a trusted center to generate user private
keys, the trust model of BE must be stronger than that of PKE. This makes it
difficult to widely apply BE schemes to large-scale real-world applications
where only the weak trust model is applied. Recently, the concept of
distributed broadcast encryption (DBE) has been proposed \cite{WuQZD10,
BonehZ14m}, which does not require a trusted center for user private key
generation. In DBE schemes, individual users generate their own public and
private keys and store the public keys in a public repository. Then, a sender
creates a ciphertext using the public keys for a subset of receivers, and a
receiver can decrypt the ciphertext if he or she belongs to the subset of
receivers in the ciphertext. Recently, Kolonelos et al. proposed efficient
DBE schemes by decentralizing the private key generation of efficient BE
schemes in bilinear groups and proved the adaptive CPA security of their DBE
schemes \cite{KolonelosMW23}.

The standard security model for PKE is the CCA security model, which allows
an attacker to request ciphertext decryption queries \cite{RackoffS91}. The
right security model for DBE schemes is also the adaptive CCA security model,
which allows an attacker to submit a challenge target set in the challenge
phase and request ciphertext decryption queries. Previously proposed DBE
schemes have considered adaptive CPA security, but not adaptive CCA security.
In this paper, we examine the problem of designing a DBE scheme that provides
adaptive CCA security.

\subsection{Our Contributions}

To design an adaptive CCA secure DBE scheme, we first propose a semi-static
CCA secure DBE scheme. A semi-static CCA security model is a restricted model
in which an attacker initially commits an initial set $\tilde{S}$ and submits
a challenge target set $S^* \subseteq \tilde{S}$ in the challenge phase,
allowing the attacker to request key generation and decryption queries. To
construct our DBE$_{SS}$ scheme, we start from the KMW-DBE1 scheme of
Kolonelos et al. \cite{KolonelosMW23} that is derived from the BGW-BE scheme
of Boneh et al. \cite{BonehGW05} that provides CPA security. In this case, we
directly use a ciphertext element to provide ciphertext integrity instead of
using a signature scheme to improve efficiency, and we modify the decryption
process to derive a randomized private key like the private key of IBE. In
addition, we apply a batch verification method to check public key elements
of the user's public key. Since a simple batch verification method does not
work, we change one group element in a ciphertext from $\G_1$ to $\G_2$ and
set most of the public key elements to be group elements in $\G_1$. Our
proposed DBE$_{SS}$ scheme has constant size ciphertexts, constant size
private keys, linear size public keys, and linear size public parameters. The
public key verification of our scheme is very efficient since it only
requires two pairing operations and efficient group membership check
operations. We prove the semi-static CCA security of our DBE$_{SS}$ scheme
under the $q$-Type assumption and the collision resistance of hash functions.

Next, we design an adaptive CCA secure DBE$_{AD}$ scheme by combining the
semi-static CCA secure DBE$_{SS}$ scheme designed above, a strongly
unforgeable one-time signature (OTS) scheme, and the GW transformation of
Gentry and Waters \cite{GentryW09}, and prove adaptive CCA security through
the hybrid games. In an adaptive CCA security model, an attacker can request
key generation, key corruption, and decryption queries, and the attacker
submits a challenge target set $S^*$ in the challenge phase and obtains the
corresponding challenge ciphertext header and challenge session key. The
attacker wins this game if he or she can distinguish whether the challenge
session key is correct or random. The GW transformation is a method that
converts a semi-static CPA secure DBE scheme into an adaptive CPA secure DBE
scheme by doubling the ciphertext size. We modify the existing GW
transformation to combine a semi-static CCA secure DBE scheme and a strongly
unforgeable OTS scheme for CCA security as well. Our proposed DBE$_{AD}$
scheme is the first DBE scheme that provides adaptive CCA security, and it
also naturally satisfies active-adaptive CCA security in which an attacker
registers malicious public keys. The comparison of our DBE scheme with the
previous DBE schemes is given in Table \ref{tab:comp-dbe}.

\begin{table*}[t]
\caption{Comparison of DBE schemes in bilinear groups}
\label{tab:comp-dbe}
\vs \small \addtolength{\tabcolsep}{8.0pt}
\renewcommand{\arraystretch}{1.4}
\newcommand{\otoprule}{\midrule[0.09em]}
\begin{tabularx}{6.50in}{lccccccc}
\toprule
Scheme  & Type & PP  & USK  & UPK  & CT  & Model & Assumption \\
\otoprule
WQZD \cite{WuQZD10}
    & DBE & $O(L)$ & $O(L)$ & $O(L^2)$ & $O(1)$	& AD-CPA & $q$-Type \\
KMW \cite{KolonelosMW23}
    & DBE & $O(L)$ & $O(1)$ & $O(L)$ & $O(1)$   & AD-CPA & $q$-Type \\
KMW \cite{KolonelosMW23}
    & DBE & $O(L^2)$ & $O(1)$ & $O(L)$ & $O(1)$ & AD-CPA & $k$-LIN \\
Lee \cite{Lee25}
    & DBE & $O(L)$ & $O(1)$ & $O(L)$ & $O(1)$   & AD-CPA & SD, GSD \\
\midrule[0.05em]
Ours & DBE & $O(L)$ & $O(1)$ & $O(L)$ & $O(1)$   & AD-CCA & $q$-Type \\
\bottomrule
\multicolumn{8}{p{6.10in}}{
Let $L$ be the number of all users.
We count the number of group elements to measure the size.
We use symbols AD-CPA for adaptive CPA security and AD-CCA for adaptive CCA
security.
}
\end{tabularx}
\end{table*}

\subsection{Related Work}

\tb{Chosen-Ciphertext Security.} The right definition of CCA security in PKE
was given by Rackoff and Simon and they showed that a CCA secure PKE scheme
can be constructed by combining a CPA secure PKE scheme with a
non-interactive zero-knowledge proof (NIZK) system \cite{RackoffS91}. Cramer
and Shoup proposed an efficient CCA secure PKE scheme by using a hash proof
system (HPS), which is a special form of zero-knowledge proof
\cite{CramerS02,CramerS03}. Canetti et al. presented a transformation method
that converts a CPA secure IBE scheme to a CCA secure PKE scheme by using an
one-time signature (OTS) scheme for ciphertext integrity \cite{CanettiHK04}.
After that, Boneh and Katz showed that a more efficient CCA secure PKE scheme
can be built by using a message authentication code (MAC) scheme instead of
the OTS scheme in the CHK transformation \cite{BonehCHK07}. Boyen et al.
modified the CHK transformation method and showed that an efficient CCA
secure KEM without additional OTS and MAC schemes can be built by using the
IBE scheme directly \cite{BoyenMW05}. Dodis and Katz proposed a method to
ensure CCA security when a ciphertext consists of multiple independent
ciphertexts \cite{DodisK05}.

\vs\noindent \tb{Broadcast Encryption.} The concept of BE was first
introduced by Fiat and Naor and they proposed a BE scheme by using
combinatorial methods \cite{FiatN93}. Naor et al. proposed BE schemes that
are secure against collusion attacks without maintaining state information by
using binary trees \cite{NaorNL01}. Boneh et al. proposed the first efficient
BE scheme with constant size ciphertexts in bilinear groups and proved static
CPA security under the $q$-Type assumption \cite{BonehGW05}. In addition,
they proposed a CCA secure BE scheme by combining the IBE private key
generation method with an OTS scheme. Abdalla et al. proposed a method to
design a BE scheme with constant size ciphertexts by extending the private
key delegation function of the hierarchical identity-based encryption (HIBE)
scheme with constant size ciphertexts \cite{AbdallaKN07}. Most of the
existing BE schemes are proven in the static model that specifies the
challenge target set in advance, but the right security model is the adaptive
model that specifies the challenge target set in the challenge phase. Gentry
and Waters presented an efficient transformation to design an adaptive CPA
secure BE scheme by doubling the ciphertext size of a semi-static CPA secure
BE scheme \cite{GentryW09}. Recently, methods for designing efficient BE
schemes with optimal parameters and ciphertext sizes using attribute-based
encryption (ABE) schemes have been proposed \cite{AgrawalY20,AgrawalWY20,
Wee22}.

\vs\noindent \tb{Distributed Broadcast Encryption.} Wu et al. introduced the
concept of ad-hoc broadcast encryption (AHBE) in which individual users
independently generate private and public keys, and then proposed an AHBE
scheme with constant size ciphertexts and linear size private keys in
bilinear groups \cite{WuQZD10}. Boneh and Zhandry defined the concept of DBE
and showed that a DBE scheme can be designed using indistinguishable
obfuscation \cite{BonehZ14m}. DBE schemes require user indexes when
generating private keys of users, but flexible broadcast encryption (FBE)
schemes do not require user indexes for key generation. Garg et al. proposed
a general method to convert a DBE scheme into an FBE scheme by increasing the
size of users' public keys \cite{GargLWW23}. Kolonelos et al. proposed
efficient DBE schemes with constant size ciphertexts and constant size
private keys in bilinear groups and proved adaptive CPA security
\cite{KolonelosMW23}. Champion and Wu designed the first DBE scheme in
lattices and proved the static CPA security \cite{ChampionW24}.

\section{Preliminaries}

In this section, we review bilinear groups with complexity assumptions,
symmetric-key encryption, and strongly unforgeable one-time signatures.

\subsection{Bilinear Groups and Complexity Assumptions}

A bilinear group generator $\mc{G}$ takes as input a security parameter
$\lambda$ and outputs a tuple $(p, \G, \hat{\G}, \G_T, e)$ where $p$ is a
random prime and $\G, \hat{\G}$, and $\G_T$ be three cyclic groups of prime
order $p$. Let $g$ and $\hat{g}$ be generators of $\G$ and $\hat{\G}$,
respectively. The bilinear map $e : \G \times \hat{\G} \rightarrow \G_{T}$
has the following properties:
\begin{enumerate}
\item Bilinearity: $\forall u \in \G, \forall \hat{v} \in \hat{\G}$ and
    $\forall a,b \in \Z_p$, $e(u^a,\hat{v}^b) = e(u,\hat{v})^{ab}$.
\item Non-degeneracy: $\exists g \in \G, \hat{g} \in \hat{\G}$ such that
    $e(g,\hat{g})$ has order $p$ in $\G_T$.
\end{enumerate}
We say that $\G, \hat{\G}, \G_T$ are asymmetric bilinear groups with no
efficiently computable isomorphisms if the group operations in $\G,
\hat{\G}$, and $\G_T$ as well as the bilinear map $e$ are all efficiently
computable, but there are no efficiently computable isomorphisms between $\G$
and $\hat{\G}$.

\begin{assumption}[Bilinear Diffie-Hellman Exponent, BDHE \cite{BonehGW05}]
Let $(p, \G, \hat{\G}, \G_T, e)$ be an asymmetric bilinear group generated by
$\mc{G}(1^\lambda)$. Let $g, \hat{g}$ be random generators of $\G, \hat{\G}$
respectively. The $\ell$-bilinear Diffie-Hellman exponent ($\ell$-BDHE)
assumption is that if the challenge tuple
	$$D = \big( (p, \G, \hat{\G}, \G_T, e), g, g^a, \ldots, g^{a^\ell},
	g^c, \hat{g}, \hat{g}^a, \ldots, \hat{g}^{a^\ell}, \hat{g}^{a^{\ell+2}},
	\ldots, \hat{g}^{a^{2\ell}} \big) \mbox{ and } Z$$
are given, no PPT algorithm $\mc{A}$ can distinguish $Z = Z_0 =
e(g,\hat{g})^{a^{\ell+1}c}$ from $Z = Z_1 = e(g,\hat{g})^d$ with more than a
negligible advantage. The advantage of $\mc{A}$ is defined as
    $\Adv_{\mc{A}}^{BDHE} (\lambda) = \big|
    \Pr[\mc{A}(D,Z_0) = 0] - \Pr[\mc{A}(D,Z_1) = 0] \big|$
where the probability is taken over random choices of $a, c, d \in \Z_p$.
\end{assumption}

\subsection{Symmetric Key Encryption}

Symmetric key encryption (SKE) is a cryptographic technique that uses the
same symmetric key for encryption and decryption algorithms. The traditional
security model for SKE is the indistinguishability under chosen-plaintext
attack (IND-CPA) model, but we define the one-message indistinguishability
(OMI) model that considers a single challenge ciphertext. If an SKE scheme is
IND-CPA secure, then it is also OMI secure.

\begin{definition}[Symmetric Key Encryption]
A symmetric key encryption (SKE) scheme consists of three algorithms
$\tb{GenKey}, \tb{Encrypt}$, and $\tb{Decrypt}$, which are defined as follows:
\begin{description}
\item $\tb{GenKey}(1^\lambda)$: The key generation algorithm takes as input a
security parameter $\lambda$. It outputs a symmetric key $K$.

\item $\tb{Encrypt}(K, M)$: The encryption algorithm takes as input a
symmetric key $K$ and a message $M$. It outputs a ciphertext $C$.

\item $\tb{Decrypt}(K, C)$: The decryption algorithm takes as input a
symmetric key $K$ and a ciphertext $C$. It outputs a message $M$ or a special
symbol $\perp$.
\end{description}
The correctness property of SKE is defined as follows: For all $K$ generated
by $\tb{GenKey}(1^\lambda)$ and any message $M$, it is required that
$\tb{Decrypt} (K, \tb{Encrypt}(K, M)) = M$.
\end{definition}

\begin{definition}[One-Message Indistinguishability] \label{def:ske-omi-sec}
The one-message indistinguishability (OMI) of SKE is defined in terms of the
following experiment between a challenger $\mc{C}$ and a PPT adversary $\mc{A}$
where $1^\lambda$ is given as input:
\begin{enumerate}
\item \tb{Setup}: $\mc{C}$ obtains a symmetric key $K$ by running $\tb{GenKey}
(1^{\lambda})$ and keeps $K$ to itself.

\item \tb{Challenge}: $\mc{A}$ submits challenge messages $M_0^*, M_1^*$ where
$|M_0^*| = |M_1^*|$. $\mc{C}$ flips a random coin $\mu \in \bits$ and obtains
$CT^*$ by running $\tb{Encrypt}(K, M_\mu^*)$. It gives $CT^*$ to $\mc{A}$.

\item \tb{Guess}: $\mc{A}$ outputs a guess $\mu' \in \bits$. $\mc{C}$ outputs
$1$ if $\mu = \mu'$ or $0$ otherwise.
\end{enumerate}
The advantage of $\mc{A}$ is defined as $\Adv_{SKE,\mc{A}}^{OMI} (\lambda) =
\big| \Pr[\mu = \mu'] - \frac{1}{2} \big|$ where the probability is taken over
all the randomness of the experiment. An SKE scheme is OMI secure if for all
probabilistic polynomial-time (PPT) adversary $\mc{A}$, the advantage of
$\mc{A}$ is negligible in the security parameter $\lambda$.
\end{definition}

\subsection{One-Time Signature}

One-time signature (OTS) is a special kind of public-key signature (PKS) that
allows an attacker to obtain at most one signature. The security model of OTS
is strong unforgeability, which allows an attacker to obtain at most one
signature from a signing oracle, and the message queried to the signing
oracle is also allowed to be a forged message if the forged signature by the
attacker is different from a signature received from the signing oracle.

\begin{definition}[One-Time Signature]
A one-time signature (OTS) scheme consists of three algorithms $\tb{GenKey},
\tb{Sign}$, and $\tb{Verify}$, which are defined as follows:
\begin{description}
\item $\tb{GenKey}(1^\lambda)$: The key generation algorithm takes as input a
security parameter $\lambda$. It outputs a signing key $SK$ and a verification
key $VK$.

\item $\tb{Sign}(SK, M)$: The signing algorithm takes as input a signing key
$SK$ and a message $M$. It outputs a signature $\sigma$.

\item $\tb{Verify}(VK, \sigma, M)$: The verification algorithm takes as input
a verification key $VK$, signature $\sigma$, and a message $M$. It outputs $1$
if the signature is valid and $0$ otherwise.
\end{description}
The correctness property of OTS is defined as follows: For all $(SK, VK)$
generated by $\tb{GenKey}(1^\lambda)$ and any message $M$, it is required that
$\tb{Verify} (VK, \tb{Sign}(SK, M), M) = 1$.
\end{definition}

\begin{definition}[Strong Unforgeability] \label{def:ots-suf-sec}
The strong unforgeability (SUF) of OTS is defined in terms of the following
experiment between a challenger $\mc{C}$ and a PPT adversary $\mc{A}$ where
$1^\lambda$ is given as input:
\begin{enumerate}
\item \tb{Setup}: $\mc{C}$ first generates a key pair $(SK, VK)$ by running
$\tb{GenKey}(1^{\lambda})$ and gives $VK$ to $\mc{A}$.

\item \tb{Signature Query}: $\mc{A}$ requests at most one signature query on
a message $M$. $\mc{C}$ generates a signature $\sigma$ by running $\tb{Sign}
(SK, M)$ and gives $\sigma$ to $\mc{A}$.

\item \tb{Output}: Finally, $\mc{A}$ outputs a forged pair $(\sigma^*, M^*)$.
$\mc{C}$ outputs $1$ if the forged pair satisfies the following conditions,
or outputs $0$ otherwise: 1) $\tb{Verify}(VK, \sigma^*, M^*) = 1$,
2) $(\sigma^*, M^*) \neq (\sigma, M)$ where $(\sigma, M)$ is the pair of
the signature query.
\end{enumerate}
The advantage of $\mc{A}$ is defined as $\Adv_{OTS,\mc{A}}^{SUF} (\lambda) =
\Pr[\mc{C} = 1]$ where the probability is taken over all the randomness of the
experiment. An OTS scheme is SUF secure if for all probabilistic
polynomial-time (PPT) adversary $\mc{A}$, the advantage of $\mc{A}$ is
negligible in the security parameter $\lambda$.
\end{definition}

Efficient PKS schemes that provide strong unforgeability in bilinear groups
include the BLS-PKS scheme of Boneh et al. \cite{BonehLS04} in the random
oracle model and the BB-PKS scheme of Boneh and Boyen \cite{BonehB04s}
without the random oracle model.

\section{Distributed Broadcast Encryption}

In this section, we define the syntax of DBE and the security model of DBE.

\subsection{Definition}

In a DBE scheme, a trusted center runs the setup algorithm to generate public
parameters. Each user generates his or her own private key and public key by
running the key generation algorithm and discloses the public key to a public
directory. A sender runs the encryption algorithm to create a ciphertext
header and a session key using a subset of recipients and their public keys.
After that, a receiver can derive the same session key by running the
decryption algorithm using the recipients' public keys and its own private
key If the index of a receiver is included in the subset of the ciphertext
header. A more detailed syntax of a DBE scheme is given as follows:

\begin{definition}[Distributed Broadcast Encryption]
A distributed broadcast encryption (DBE) scheme consists of five algorithms
$\tb{Setup}, \tb{GenKey}, \tb{IsValid}, \tb{Encaps}$, and $\tb{Decaps}$,
which are defined as follows:
\begin{description}
\item $\tb{Setup}(1^\lambda, 1^L)$: This algorithm takes as input a security
parameter $1^\lambda$, and the number users $L$. It outputs public parameters
$PP$.

\item $\tb{GenKey}(i, PP)$: This algorithm takes as input a user index $i \in
[L]$ and public parameters $PP$. It outputs a private key $USK_i$ and a public
key $UPK_i$.

\item $\tb{IsValid}(j, UPK_j, PP)$: This algorithm takes as input an index
$j$, a public key $UPK_j$, and the public parameters $PP$. It outputs $1$ or
$0$ depending on the validity of keys.

\item $\tb{Encaps}(S, \{ (j, UPK_j) \}_{j \in S}, PP, AU)$: This algorithm
takes as input a set $S \subseteq [L]$, public keys $\{ (j, UPK_j) \}_{j \in
S}$, public parameters $PP$, and an optional auxiliary input $AU$. It outputs
a ciphertext header $CH$ and a session key $CK$.

\item $\tb{Decaps}(S, CH, i, USK_i, \{ (j, UPK_j) \}_{j \in S}, PP, AU)$: This
algorithm takes as input a set $S$, a ciphertext header $CH$, an index $i$, a
private key $USK_i$ for the index $i$, public keys $\{ (j, UPK_j) \}_{j \in
S}$, public parameters $PP$, and an optional auxiliary input $AU$. It outputs
a session key $CK$.
\end{description}
The correctness of DBE is defined as follows: For all $PP$ generated by
$\tb{Setup}(1^{\lambda}, 1^L)$, all $(USK_i, UPK_i)$ generated by
$\tb{GenKey}(i, PP)$, all $UPK_j$ such that $\tb{IsValid}(j, UPK_j, PP)$, all
$S \subseteq [L]$, it is required that
\begin{itemize}
\item If $i \in S$, then $CK = CK'$ where $(CH, CK) = \tb{Encaps}(S, \{ (j,
UPK_j) \}_{j \in S}, PP, AU)$ and $CK' = \tb{Decaps}(S, \lb CH, i, USK_i, \{ (j,
UPK_j) \}_{j \in S}, PP, AU)$.
\end{itemize}
\end{definition}

\begin{remark}
We modify the syntax of DBE to accept an auxiliary input $AU$ to the
encryption and decryption algorithms. This $AU$ field is optional and served
as a label.
\end{remark}

\subsection{Security Model}

The semi-static CPA security of BE was defined by Gentry and Waters
\cite{GentryW09}. We extend the existing semi-static CPA security model to
the semi-static CCA security model by adding a decryption oracle. In the
semi-static CCA security model, an attacker first submits an initial set
$\tilde{S}$, and a challenger generates public parameters by running the
setup algorithm. In the query phase, the attacker can request key generation
and decryption queries with some constrains, and the challenger processes
these queries by executing the key generation algorithm or the decryption
algorithm. In the challenge phase, the attacker submits a set $S^* \subseteq
\tilde{S}$, the challenger obtains a ciphertext header $CH^*$ and a session
key $CK^*$ using the encryption algorithm, and it sets $CK_0^* = CK^*$ and
$CK_1^*$ with random. After that, the challenger flips a random coin $\mu$
and sends the challenge $CH^*, CK_{\mu}^*$ to the attacker. Afterwards, the
attacker can additionally request decryption queries except the challenge
ciphertext header. Finally, the attacker wins this game if it correctly
guesses the random coin of the challenger. The detailed definition of the
security model is given as follows:

\begin{definition}[Semi-Static CCA Security] \label{def:dbe-sscca-sec}
The semi-static CCA (SS-CCA) security of DBE is defined in terms of the
following experiment between a challenger $\mc{C}$ and a PPT adversary
$\mc{A}$ where $1^\lambda$ and $1^L$ are given as input:
\begin{enumerate}
\item \tb{Init}: $\mc{A}$ initially commits an initial set $\tilde{S}
\subseteq [L]$.

\item \tb{Setup}: $\mc{C}$ obtains public parameters $PP$ by running
$\tb{Setup}(1^{\lambda}, 1^L)$ and gives $PP$ to $\mc{A}$.

\item \tb{Query Phase 1}: $\mc{A}$ adaptively requests key generation and
decryption queries. These queries are processed as follows:
	\begin{itemize}
	\item Key Generation: For all $j \in \tilde{S}$, $\mc{C}$ generates
	$(USK_j, UPK_j)$ by running $\tb{GenKey} (j, PP)$ and gives $\{ (j, UPK_j)
	\}_{j \in \tilde{S}}$ to $\mc{A}$.

	\item Decryption: $\mc{A}$ issues this query on $(S, CH, i, AU)$ such that
	$S \subseteq \tilde{S}$ and $i \in S$. $\mc{C}$ responds with $\tb{Decaps}
	(S, CH, i, USK_i, \{ (j, UPK_j) \}_{j \in S}, PP, AU)$.
	\end{itemize}

\item \tb{Challenge}: $\mc{A}$ submits a challenge set $S^* \subseteq
\tilde{S}$. $\mc{C}$ obtains a ciphertext tuple $(CH^*, CK^*)$ by running
$\tb{Encaps} (S^*, \{ (j, UPK_j) \}_{j \in S^*}, PP, -)$. It sets $CK_0^* =
CK^*$ and $CK_1^* = RK$ by selecting a random $RK$. It flips a random coin
$\mu \in \bits$ and gives $(CH^*, CK_\mu^*)$ to $\mc{A}$.

\item \tb{Query Phase 2}: $\mc{A}$ continues to request decryption queries.
These queries are processed as follows:
	\begin{itemize}
	\item Decryption: $\mc{A}$ issues this query on $(S, CH, i, AU)$ such that
	$S \subseteq \tilde{S}$, $i \in S$, and $CH \neq CH^*$. $\mc{C}$
	responds with $\tb{Decaps} (S, CH, i, USK_i, \{ (j, UPK_j) \}_{j \in S},
	PP, AU)$.
	\end{itemize}

\item \tb{Guess}: Finally, $\mc{A}$ outputs a guess $\mu' \in \bits$, and wins
the game if $\mu = \mu'$.
\end{enumerate}
The advantage of $\mc{A}$ is defined as $\Adv_{DBE,\mc{A}}^{SS\text{-}CCA}
(\lambda) = \big| \Pr[\mu = \mu'] - \frac{1}{2} \big|$ where the probability
is taken over all the randomness of the experiment. A DBE scheme is SS-CCA
secure if for all probabilistic polynomial-time (PPT) adversary $\mc{A}$, the
advantage of $\mc{A}$ is negligible in the security parameter $\lambda$.
\end{definition}

Adaptive CPA security of DBE is a relaxed security model in which an attacker
can freely specify a challenge set $S^*$ in the challenge phase. We define
the adaptive CCA security model by adding a decryption oracle to the adaptive
CPA security model. In the adaptive CCA security model, unlike the
semi-static CCA model defined above, the attacker does not initially submit
an initial set $\tilde{S}$. In the query phase, the challenger manages a key
generation query set $KQ$, a key corruption query set $CQ$, and a decryption
query set $DQ$ for the queries requested by the attacker. In the challenge
phase, the attacker submits a challenge set $S^* \subseteq KQ \setminus CQ$,
and the challenge ciphertext header and the session key are transmitted to
the attacker. After that, the attacker additionally requests decryption
queries and wins the game if it can correctly guess the coin selected by the
challenger. The detailed definition of the security model is given as
follows:

\begin{definition}[Adaptive CCA Security] \label{def:dbe-adcca-sec}
The adaptive CCA (AD-CCA) security of DBE is defined in terms of the following
experiment between a challenger $\mc{C}$ and a PPT adversary $\mc{A}$ where
$1^\lambda$ and $1^L$ are given as input:
\begin{enumerate}
\item \tb{Setup}: $\mc{C}$ obtains public parameters $PP$ by running
$\tb{Setup} (1^{\lambda}, 1^L)$ and gives $PP$ to $\mc{A}$. It prepares $KQ,
CQ$, and $DQ$ as empty set.

\item \tb{Query Phase 1}: $\mc{A}$ adaptively requests key generation, key
corruption, and decryption queries. These queries are processed as follows:
    \begin{itemize}
	\item Key Generation: $\mc{A}$ issues this query on an index $i \in [L]$
	such that $i \not\in KQ$. $\mc{C}$ creates $(USK_i, UPK_i)$ by running
	$\tb{GenKey}(i, PP)$, adds $i$ to $KQ$, and responds $UPK_i$ to $\mc{A}$.

	\item Key Corruption: $\mc{A}$ issues this query on an index $i \in [L]$
	such that $i \in KQ \setminus (CQ \cup DQ)$. $\mc{C}$ adds $i$ to $CQ$ and
	responds $USK_i$ to $\mc{A}$.

	\item Decryption: $\mc{A}$ issues this query on $(S, CH, i, AU)$ such that
	$S \subseteq KQ \setminus CQ$ and $i \in S$. $\mc{C}$ obtains $CK$ by
	running $\tb{Decaps} (S, CH, i, USK_i, \{ (j, UPK_j) \}_{j \in S}, PP,
	AU)$, adds $i$ to $DQ$, and responds $CK$ to $\mc{A}$.
    \end{itemize}

\item \tb{Challenge}: $\mc{A}$ submits a challenge set $S^* \subseteq KQ
\setminus CQ$. $\mc{C}$ obtains a ciphertext tuple $(CH^*, CK^*)$ by running
$\tb{Encaps} (S^*, \{ (j, UPK_j) \}_{j \in S^*}, PP, -)$. It sets $CK_0^* =
CK^*$ and $CK_1^* = RK$ by selecting a random $RK$. It flips a random coin
$\mu \in \bits$ and gives $(CH^*, CK_\mu^*)$ to $\mc{A}$.

\item \tb{Query Phase 2}: $\mc{A}$ continues to request decryption queries.
These queries are processed as follows:
	\begin{itemize}
	\item Decryption: $\mc{A}$ issues this query on $(S, CH, i, AU)$ such that
	$S \subseteq KQ \setminus CQ$, $i \in S$, and $CH \neq CH^*$. $\mc{C}$
	obtains $CK$ by running $\tb{Decaps} (S, CH, i, USK_i, \{ (j, UPK_j) \}_{j
	\in S}, PP, AU)$, adds $i$ to $DQ$, and responds $CK$ to $\mc{A}$.
	\end{itemize}

\item \tb{Guess}: Finally, $\mc{A}$ outputs a guess $\mu' \in \bits$, and
wins the game if $\mu = \mu'$.
\end{enumerate}
The advantage of $\mc{A}$ is defined as $\Adv_{DBE,\mc{A}}^{AD\text{-}CCA}
(\lambda) = \big| \Pr[\mu = \mu'] - \frac{1}{2} \big|$ where the probability
is taken over all the randomness of the experiment. A DBE scheme is AD-CCA
secure if for all probabilistic polynomial-time (PPT) adversary $\mc{A}$, the
advantage of $\mc{A}$ is negligible in the security parameter $\lambda$.
\end{definition}


The active-adaptive CCA security model is the extension of the adaptive CCA
security model that allows an attacker to register malicious public keys
\cite{KolonelosMW23}. In the active-adaptive security model, the index
belonging to the set of maliciously registered public keys is not included in
the challenge set, so it is generally known that an adaptive CPA secure DBE
scheme also satisfies active-adaptive CPA security. For the same reason, the
adaptive CCA secure DBE scheme is also active-adaptive CCA secure.

\begin{definition}[Active-Adaptive CCA Security] \label{def:dbe-aacca-sec}
The active-adaptive CCA (AA-CCA) security of DBE is defined in terms of the
following experiment between a challenger $\mc{C}$ and a PPT adversary
$\mc{A}$ where $1^\lambda$ and $1^L$ are given as input:
\begin{enumerate}
\item \tb{Setup}: $\mc{C}$ obtains public parameters $PP$ by running
$\tb{Setup}(1^{\lambda}, 1^L)$ and gives $PP$ to $\mc{A}$. It prepares $KQ,
CQ, MQ$, and $DQ$ as empty set.

\item \tb{Query Phase 1}: $\mc{A}$ adaptively requests key generation, key
corruption, malicious corruption, and decryption queries. These queries are
processed as follows:
    \begin{itemize}
    \item Key Generation: $\mc{A}$ issues this query on an index $i \in [L]$
	such that $i \not\in KQ \wedge i \not\in MQ$. $\mc{C}$ creates $(USK_i,
	UPK_i)$ by running $\tb{GenKey}(i, PP)$, adds $i$ to $KQ$, and responds
	$UPK_i$ to $\mc{A}$.

    \item Key Corruption: $\mc{A}$ issues this query on an index $i \in [L]$
	such that $i \in KQ \wedge i \not\in CQ$. $\mc{C}$ adds $i$ to $CQ$ and
	responds with $USK_i$ to $\mc{A}$.

	\item Malicious Corruption: $\mc{A}$ issues this query on an index $i \in
	[L]$ such that $i \not\in KQ \wedge i \not\in MQ$. $\mc{C}$ adds $i$ to
	$MQ$ and stores $UPK_i$.

	\item Decryption: $\mc{A}$ issues this query on $(S, CH, i, AU)$ such that
	$S \subseteq KQ \setminus ( CQ \cup MQ )$ and $i \in S$. $\mc{C}$ responds
	with $\tb{Decaps} (S, CH, i, USK_i, \{ (j, UPK_j) \}_{j \in S}, PP, AU)$.
    \end{itemize}

\item \tb{Challenge}: $\mc{A}$ submits a challenge set $S^* \subseteq KQ
\setminus (CQ \cup MQ)$. $\mc{C}$ obtains a ciphertext tuple $(CH^*, CK^*)$
by running $\tb{Encaps} (S^*, \{ (j, UPK_j) \}_{j \in S^*}, PP, -)$. It sets
$CK_0^* = CK^*$ and $CK_1^* = RK$ by selecting a random $RK$. It flips a
random coin $\mu \in \bits$ and gives $(CH^*, CK_\mu^*)$ to $\mc{A}$.

\item \tb{Query Phase 2}: $\mc{A}$ continues to request decryption queries.
These queries are processed as follows:
	\begin{itemize}
	\item Decryption: $\mc{A}$ issues this query on $(S, CH, i, AU)$ such that
	$S \subseteq KQ \setminus (CQ \cup MQ)$, $i \in S$, and $CH \neq CH^*$.
	$\mc{C}$ responds with $\tb{Decaps} (S, CH, i, USK_i, \{ (j, UPK_j) \}_{j
	\in S}, PP, AU)$.
	\end{itemize}

\item \tb{Guess}: Finally, $\mc{A}$ outputs a guess $\mu' \in \bits$, and wins
the game if $\mu = \mu'$.
\end{enumerate}
The advantage of $\mc{A}$ is defined as $\Adv_{DBE,\mc{A}}^{AA\text{-}CCA}
(\lambda) = \big| \Pr[\mu = \mu'] - \frac{1}{2} \big|$ where the probability
is taken over all the randomness of the experiment. A DBE scheme is AA-CCA
secure if for all probabilistic polynomial-time (PPT) adversary $\mc{A}$, the
advantage of $\mc{A}$ is negligible in the security parameter $\lambda$.
\end{definition}

\begin{lemma}[\cite{KolonelosMW23}] \label{lem:conv-ad-to-aa}
Let $\Pi_{AD}$ be an adaptively secure DBE scheme. Then $\Pi_{AD}$ is also
active-adaptively secure.
\end{lemma}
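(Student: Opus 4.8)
The plan is to give a black-box reduction showing that any adversary against active-adaptive security is already an adversary against adaptive security with the same advantage, so that no new assumption is needed. Concretely, suppose $\mc{A}$ is a PPT adversary that attacks the AA-CCA security of $\Pi_{AD}$. I would construct a PPT adversary $\mc{B}$ that attacks the AD-CCA security of the same scheme $\Pi_{AD}$ by running $\mc{A}$ as a subroutine and forwarding $\mc{A}$'s queries to its own AD-CCA challenger $\mc{C}$, while simulating for $\mc{A}$ the one query type --- malicious corruption --- that the AD-CCA challenger does not support. The target is to establish $\Adv_{DBE,\mc{A}}^{AA\text{-}CCA}(\lambda) = \Adv_{DBE,\mc{B}}^{AD\text{-}CCA}(\lambda)$.

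First I would describe the simulation. Upon receiving $PP$ from $\mc{C}$, $\mc{B}$ hands $PP$ to $\mc{A}$ and initializes a local set $MQ = \emptyset$ (the sets $KQ, CQ, DQ$ are maintained implicitly by $\mc{C}$). For a key generation query on $i$, $\mc{B}$ first checks the AA-CCA admissibility condition $i \notin KQ \wedge i \notin MQ$; if it holds, $\mc{B}$ relays the query to $\mc{C}$ and returns the resulting $UPK_i$. Key corruption queries and decryption queries are likewise relayed verbatim to $\mc{C}$, and the answers $USK_i$ and $CK$ are passed back to $\mc{A}$. A malicious corruption query on an index $i$ with an adversarially chosen $UPK_i$ is handled entirely by $\mc{B}$ itself: it checks $i \notin KQ \wedge i \notin MQ$, adds $i$ to its local $MQ$, and stores $UPK_i$; no interaction with $\mc{C}$ is required. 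In the challenge phase $\mc{B}$ forwards $\mc{A}$'s challenge set $S^*$ to $\mc{C}$ unchanged and relays $(CH^*, CK_\mu^*)$ back. Finally, $\mc{B}$ outputs whatever bit $\mu'$ that $\mc{A}$ outputs.

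Next I would verify that every query $\mc{B}$ forwards is admissible in the AD-CCA game, so that $\mc{C}$ never aborts and the view of $\mc{A}$ is distributed exactly as in a genuine AA-CCA experiment. The essential observation is that the AA-CCA model restricts both the challenge set and every decryption set to exclude maliciously registered indices: the challenge satisfies $S^* \subseteq KQ \setminus (CQ \cup MQ)$ and each decryption query uses $S \subseteq KQ \setminus (CQ \cup MQ)$. Since $KQ \setminus (CQ \cup MQ) \subseteq KQ \setminus CQ$, every forwarded challenge and decryption query automatically meets the weaker AD-CCA admissibility condition $S \subseteq KQ \setminus CQ$; the condition $CH \neq CH^*$ in phase~2 is preserved verbatim. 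Key generation and key corruption queries that pass the AA-CCA check also pass the AD-CCA check, because $i \notin KQ \wedge i \notin MQ$ implies $i \notin KQ$, and the corruption condition $i \in KQ \wedge i \notin CQ$ is identical in both models. Hence the simulation is perfect and $\mc{B}$ wins exactly when $\mc{A}$ does, giving the claimed equality of advantages and therefore active-adaptive security of $\Pi_{AD}$.

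The main point to get right --- rather than a genuine obstacle --- is the argument that the stored malicious public keys are inert from $\mc{B}$'s perspective: $\mc{B}$ never has to produce a private key or a decryption for a maliciously registered index, precisely because such indices are forbidden from appearing in the challenge set and in any decryption set $S$. This is exactly the structural feature of the active-adaptive model that makes the reduction free, and it is where I would focus the written proof; the remaining bookkeeping of the sets $KQ, CQ, MQ$ is routine.
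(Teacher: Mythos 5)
The paper does not actually prove this lemma: it is imported from \cite{KolonelosMW23}, and the only justification offered is the one-sentence remark preceding the statement, namely that maliciously registered indices are excluded from the challenge set. Your reduction is precisely that remark made formal, so in spirit you are doing exactly what the paper intends, and your central observation --- that $\mc{B}$ never needs a private key or a decryption for a maliciously registered index because $S^*$ and every decryption set $S$ are required to lie in $KQ \setminus (CQ \cup MQ)$, which is contained in $KQ \setminus CQ$ --- is correct and is indeed the whole content of the lemma.

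There is, however, one concrete claim in your write-up that is false under the paper's definitions as stated. You assert that the key corruption admissibility condition ``is identical in both models.'' It is not: Definition~\ref{def:dbe-adcca-sec} (AD-CCA) only permits corruption of $i \in KQ \setminus (CQ \cup DQ)$, whereas Definition~\ref{def:dbe-aacca-sec} (AA-CCA) permits corruption of any $i \in KQ$ with $i \notin CQ$, with no exclusion of $DQ$. An AA-CCA adversary may therefore first issue a decryption query with decryptor index $i$ (placing $i$ in the challenger's set $DQ$) and later corrupt $i$; this is admissible in the AA-CCA game, but the forwarded corruption query is inadmissible in the AD-CCA game, so your challenger $\mc{C}$ would refuse it and the simulation would not be perfect for such adversaries. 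This is almost certainly a definitional artifact rather than a substantive obstruction --- the natural fix is to carry the same $DQ$ exclusion into the AA-CCA corruption condition, or to drop it from both --- but as written your claimed query-by-query admissibility does not hold, and a careful proof must either harmonize the two definitions or explicitly address this corner case.
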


\section{DBE Construction}

In this section, we propose CCA secure efficient DBE schemes in bilinear
groups.

\subsection{SS-CCA Construction} \label{sec:dbe-sscca-scheme}

We first construct a semi-static CCA-secure DBE scheme. To design this DBE
scheme, we start from the KMW-DBE1 scheme of Kolonelos et al.
\cite{KolonelosMW23}, which is a DBE scheme modified from the BGW-BE scheme
of Boneh et al. \cite{BonehGW05}. The KMW-DBE1 scheme provides CPA security
with constant-size ciphertexts, constant-size private keys, linear-size
public keys, and linear-size public parameters. One way to design a
CCA-secure DBE scheme is to use the OTS scheme to provide ciphertext
integrity and bind the OTS public key to the ciphertext. However, this method
has a disadvantage in that the ciphertext size increases because the OTS
public key and OTS signature are included in the ciphertext. To reduce the
ciphertext size of the DBE scheme, we apply the BMW technique of Boyen et al.
\cite{BoyenMW05}, which provides ciphertext integrity by directly using the
element included in the ciphertext header. However, the BMW technique is not
simply applied to the DBE scheme that supports the distributed private key
generation. To solve this problem, we modify the BMW technique to provide CCA
even for DBE schemes where individual users have public keys by adding
additional public parameters.

In addition, the existing KMW-DBE1 scheme has a disadvantage that it requires
$L$ pairing operations to check whether the user public key is correct where
$L$ is the maximum number of users. One way to improve the user public key
verification is to use a batch verification method \cite{CamenischHP12}.
However, it is difficult to apply efficient batch verification because the
group membership check in $\G_2$ is inefficient when the public key contains
many $\G_2$ group elements. To solve this problem, we change the ciphertext
structure so that the ciphertext consists of $\G_2$ and $\G_1$ group elements
instead of $\G_1$ group elements. Because of this change, we can set most of
the user public key elements to be $\G_1$ group elements. In this case,
public key verification using batch verification is very efficient because
two pairing operations are needed for batch verification and the group
membership check in $\G_1$ is very efficient. The detailed description of our
DBE scheme for the semi-static CCA security is as follows:

\begin{description}
\item [$\tb{DBE}_{SS}.\tb{Setup}(1^\lambda, 1^L)$:] Let $\lambda$ be a
security parameter and $L$ be the maximum number of users.
It first generates asymmetric bilinear groups $\G, \hat{\G}, \G_T$ of prime
order $p$ with random generators $g, \hat{g}$ of $\G, \hat{\G}$ respectively.
It selects a random exponent $\alpha \in \Z_p$ and sets $\{ A_k = g^{\alpha^k}
\}_{k=1}^{2L+2}$, $\{ \hat{A}_k = \hat{g}^{\alpha^k} \}_{k=1}^{L+1}$. It also
selects a random exponent $\beta \in \Z_p$ and sets $B = g^\beta, \{ B_k =
A_k^{\beta} \}_{2 \leq k \leq L+1}$. Next, it chooses two hash functions
$\tb{H}_1$ and $\tb{H}_2$ such that $\tb{H}_1 : \hat{\G} \rightarrow \Z_p$ and
$\tb{H}_2 : \G_T \rightarrow \bits^\lambda$.
It outputs public parameters
	\begin{align*}
    PP = \Big(
    &   (p, \G, \hat{\G}, \G_T, e),~ g,~ \hat{g},~
		\big\{ A_k \big\}_{ 1 \leq k \neq L+2 \leq 2L+2},~
		\big\{ \hat{A}_k \}_{1 \leq k \leq L+1},~
		B,~ \big\{ B_k \big\}_{2 \leq k \leq L+1},~ \\
	&	\Omega = e(A_{L+2}, \hat{g}),~ \tb{H}_1,~ \tb{H}_2
    \Big).
    \end{align*}

\item [$\tb{DBE}_{SS}.\tb{GenKey}(i, PP)$:] Let $i \in [L]$. It selects a
random exponent $\gamma_i \in \Z_p$ and builds a private key $USK_i$ and a
public key $UPK_i$ as
	\begin{align*}
    USK_i = \Big(
		K_i = A_{L+2-i}^{\gamma_i} \Big),~
    UPK_i = \Big(
        V_i = g^{\gamma_i},~ \hat{V}_i = \hat{g}^{\gamma_i},~
        \big\{
		V_{i,k} = A_k^{\gamma_i}
		\big\}_{2 \leq k \neq L+2-i \leq L+1}
    	\Big).
    \end{align*}

\item [$\tb{DBE}_{SS}.\tb{IsValid}(j, UPK_j, PP)$:] Let $UPK_j = (V_j,
\hat{V}_j, \{ V_{j,k} \})$.
It first checks that $UPK_j = (V_j, \hat{V}_j, \{ V_{j,k} \}) \in \G \times
\hat{\G} \times \G^{2L}$.
It chooses small random exponents $\delta_0, \{ \delta_k \}_{k=2}^{2L+1}$ where
$\delta_0, \delta_k$ are elements of $\ell_b$ bit from $\Z_p$ and checks that
	\begin{align*}
	e\bigg( V_j^{\delta_0} \prod_{2 \leq k \neq L+2-j \leq 2L+2}
		V_{j,k}^{\delta_k}, \hat{g} \bigg) \stackrel{?}{=}
	e\bigg( g^{\delta_0} \prod_{2 \leq k \neq L+2-j \leq 2L+2} A_k^{\delta_k},
		\hat{V}_j \bigg).
	\end{align*}
    If the equation holds, then it outputs 1. Otherwise, it outputs 0.

\item [$\tb{DBE}_{SS}.\tb{Encaps}(S, \{ (j, UPK_j) \}_{j \in S}, PP, AU)$:]
Let $UPK_j = (V_j, \hat{V}_j, \{ V_{j,k} \})$.
It selects a random exponent $t \in \Z_p$ and computes a tag $\omega = \tb{H}_1
(\hat{g}^t \| AU)$. It outputs a ciphertext header
    \begin{align*}
    CH = \Big(
    	\hat{C}_1 = \hat{g}^{t},~
    	C_2 = \big( A_{L+1}^\omega B \prod_{j \in S} A_j V_j \big)^t
    \Big)
    \end{align*}
and a session key $CK = \tb{H}_2(\Omega^t)$.

\item [$\tb{DBE}_{SS}.\tb{Decaps}(S, CH, i, USK_i, \{ (j, UPK_j) \}_{j \in S},
PP, AU)$:] Let $CH = (\hat{C}_1, C_2)$, $USK_i = K_i$, and $UPK_j = (V_j,
\hat{V}_j, \lb \{ V_{j,k} \})$. If $i \not\in S$, it outputs $\perp$.
It computes a tag $\omega = \tb{H}_1(\hat{C}_1 \| AU)$ and checks the validity
of the ciphertext
	\begin{align*}
	e\big( C_2, \hat{g} \big) \stackrel{?}{=}
	e\Big( A_{L+1}^\omega B \prod_{j \in S} A_j V_j, \hat{C}_1 \Big).
	\end{align*}
If the equation fails, then it outputs $\perp$.
It selects a random exponent $r \in \Z_p$ and builds decryption components
	\begin{align*}
	D_1 &= K_i \cdot \Big( A_{L+1}^\omega B \prod_{j \in S} A_j V_j \Big)^r,~
	\hat{D}_2 = \hat{A}_{L+2-i} \cdot \hat{g}^r,~ \\
	D_3 &= A_{2L+3-i}^\omega B_{L+2-i},~
	D_4  = \prod_{j \in S \setminus \{i\}} A_{L+2-i+j} V_{j,L+2-i}.
	\end{align*}
It outputs a session key
   	$CK = \tb{H}_2( e(C_2, \hat{D}_2) \cdot e(D_1 \cdot D_3 \cdot D_4,
		\hat{C}_1)^{-1} )$.
\end{description}

\subsection{AD-CCA Construction} \label{sec:dbe-adcca-scheme}

Now, we construct an adaptive CCA-secure DBE scheme from the semi-static
CCA-secure DBE scheme. The GW transformation of Gentry and Waters
\cite{GentryW09} is used to convert a semi-static CPA-secure BE or DBE scheme
into the adaptive CPA-secure BE or DBE scheme. However, the GW transformation
is applied to the CPA-secure BE or DBE scheme, and it needs additional
modification to be applied to the CCA-secure BE or DBE scheme. We derive the
adaptive CCA-secure DBE scheme by combining the semi-static CCA-secure DBE
scheme, the OTS scheme, and the GW transformation. Here, we use the OTS
signature for ciphertext integrity, and we modify the encryption and
decryption algorithms of the underlying DBE scheme to additionally receive a
label as input to bind the OTS public key with the DBE ciphertext. This
modification of algorithms to receive an additional label as input was widely
used in the design of existing CCA-secure PKE schemes \cite{Shoup01,Kiltz06}.
The detailed description of our DBE scheme for adaptive CCA security is given
as follows:

\begin{description}
\item [$\tb{DBE}_{AD}.\tb{Setup}(1^\lambda, 1^L)$:] Let $\lambda$
be a security parameter and $L$ be the number of users. It obtains $PP_{SS}$
by running $\tb{DBE}_{SS}.\tb{Setup}(1^\lambda, 1^{2L})$. It outputs public
parameters $PP = PP_{SS}$.

\item [$\tb{DBE}_{AD}.\tb{GenKey}(i, PP)$:] Let $i \in [L]$.  It generates two
key pairs $(USK_{SS,2i}, UPK_{SS,2i})$ and $(USK_{SS,2i-1}, \lb
UPK_{SS,2i-1})$ by running $\tb{DBE}_{SS}.\tb{GenKey}(2i, PP)$ and
$\tb{DBE}_{SS}.\tb{GenKey}(2i-1, PP)$ respectively. It selects a random bit
$u_i \in \bits$ and erases $USK_{SS,2i-(1-u_i)}$ completely. It outputs a
private key $USK_i = (USK_{SS,2i-u_i}, u_i)$ and a public key $UPK_i =
(UPK_{SS,2i}, UPK_{SS,2i-1})$.

\item [$\tb{DBE}_{AD}.\tb{IsValid}(j, UPK_j, PP)$:] Let $UPK_j =
(UPK_{SS,2j}, UPK_{SS,2j-1})$. It checks that $\tb{DBE}_{SS}.\tb{IsValid}(2j,
\lb UPK_{SS,2j}, PP_{SS}) = 1$ and $\tb{DBE}_{SS}.\tb{IsValid}(2j-1,
UPK_{SS,2j-1}, PP_{SS}) = 1$. If it passes all checks, then it outputs 1.
Otherwise, it outputs 0.

\item [$\tb{DBE}_{AD}.\tb{Encaps}(S, \{ (j, UPK_j) \}_{j \in S},
PP, AU)$:] Let $S \subseteq [L]$ and $UPK_j = (UPK_{SS,2j}, UPK_{SS,2j-1})$.
	\begin{enumerate}
	\item It generates $(SK, VK)$ by running $\tb{OTS.GenKey} (1^\lambda)$.

	\item It selects random bits $z = \{ z_j \}_{j \in S}$ where $z_j \in \bits$.
	Next, it defines
	two sets $S_0 = \{ 2j - z_j \}_{j \in S}$ and $S_1 = \{ 2j - (1-z_j) \}_{j
	\in S}$.

	\item It obtains ciphertext pairs $(CH_{SS,0}, CK_{SS,0})$ and
	$(CH_{SS,1}, CK_{SS,1})$ by running $\tb{DBE}_{SS}.\tb{Encaps} (S_0, \lb
	\{ (k, UPK_{SS,k} \}_{k \in S_0}, PP_{SS}, VK)$ and $\tb{DBE}_{SS}.
	\tb{Encaps} (S_1, \{ (k, UPK_{SS,k} \}_{k \in S_1}, PP_{SS}, VK)$
	respectively.

	\item It selects a random message $CK \in \bits^\lambda$. It obtains
	ciphertexts $CT_{0}$ and $CT_{1}$ by running $\tb{SKE.Encrypt} (CK_{SS,0},
	CK)$ and $\tb{SKE.Encrypt} (CK_{SS,1}, CK)$ respectively.

	\item It sets $CM = (CH_{SS,0}, CH_{SS,1}, CT_{0}, CT_{1}, z)$ and
	calculates $\sigma$ by running $\tb{OTS.Sign} (SK, CM)$.

	\item It outputs a ciphertext header $CH = (CM, \sigma, VK)$ and a session
	key $CK$.
	\end{enumerate}

\item [$\tb{DBE}_{AD}.\tb{Decaps}(S, CH, i, USK_i, \{ (j, UPK_j) \}_{j \in S},
PP, AU)$:] Let $CH = (CM, \sigma, VK)$ and $USK_i =  (USK_{SS,2i-u_i}, \lb u_i)$
where $CM = (CH_{SS,0}, CH_{SS,1}, CT_{0}, CT_{1}, z)$.
If $i \not\in S$, it outputs $\perp$.
	\begin{enumerate}
	\item It checks that $1 \stackrel{?}{=} \tb{OTS.Verify} (VK, \sigma, CM)$.
	If this check fails, it outputs $\perp$.

	\item It derives two sets $S_0 = \{ 2j - z_j \}_{j \in S}$ and $S_1 = \{
	2j - (1-z_j) \}_{j \in S}$. If $z_i = u_i$, then it sets $S' = S_0, CH'_{SS}
	= CH_{SS,0}, CT' = CT_{0}$. Otherwise, it sets $S' = S_1, CH'_{SS} =
	CH_{SS,1}, CT' = CT_{1}$.
	
	\item It obtains $CK'_{SS}$ by running $\tb{DBE}_{SS}.\tb{Decaps} (S',
	CH'_{SS}, 2i-u_i, USK_{SS,2i-u_i}, \{ (k, UPK_{SS,k} \}_{k \in S'}, PP_{SS},
	\lb VK)$.

	\item It obtains $CK$ by running $\tb{SKE.Decrypt} (CK'_{SS}, CT')$ and
	outputs a session key $CK$.
	\end{enumerate}
\end{description}

\subsection{Correctness}

\begin{theorem} \label{thm:dbe-sscca-correct}
The above DBE$_{SS}$ scheme is correct.
\end{theorem}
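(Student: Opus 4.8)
The plan is to verify the definitional equality $CK = CK'$ by pushing every group element through the bilinear map, so that the claim reduces to a single exponent identity in $\Z_p$. First I would record the exponent form of each honestly generated object: $A_k = g^{\alpha^k}$, $\hat{A}_k = \hat{g}^{\alpha^k}$, $B = g^\beta$, $B_k = g^{\alpha^k\beta}$, and for keys that are honestly generated (equivalently, that pass $\tb{IsValid}$, which for a well-formed key holds identically in the test exponents $\delta_0,\{\delta_k\}$) we have $V_j = g^{\gamma_j}$, $\hat{V}_j = \hat{g}^{\gamma_j}$, $V_{j,k} = g^{\alpha^k\gamma_j}$, and $K_i = g^{\alpha^{L+2-i}\gamma_i}$. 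Since $\hat{C}_1 = \hat{g}^t$, the tag $\omega = \tb{H}_1(\hat{C}_1 \| AU)$ recomputed in $\tb{Decaps}$ equals the tag $\tb{H}_1(\hat{g}^t \| AU)$ used in $\tb{Encaps}$, so a single $\omega$ governs both algorithms. Writing $\Theta = \alpha^{L+1}\omega + \beta + \sum_{j\in S}(\alpha^j + \gamma_j)$, the inner factor $A_{L+1}^\omega B \prod_{j\in S} A_j V_j$ equals $g^{\Theta}$, hence $C_2 = g^{\Theta t}$, and the ciphertext validity check of $\tb{Decaps}$ becomes $e(g,\hat{g})^{\Theta t} = e(g,\hat{g})^{\Theta t}$, so an honest ciphertext is never rejected.

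Next I would expand the two pairings forming the decapsulated key. From $\hat{D}_2 = \hat{g}^{\alpha^{L+2-i}+r}$ one gets $e(C_2,\hat{D}_2) = e(g,\hat{g})^{\Theta t(\alpha^{L+2-i}+r)}$. For the second pairing I would collect the $g$-exponent $\Phi$ of $D_1 D_3 D_4$, namely $\Phi = \alpha^{L+2-i}\gamma_i + \Theta r + \alpha^{2L+3-i}\omega + \alpha^{L+2-i}\beta + \sum_{j\in S\setminus\{i\}}(\alpha^{L+2-i+j} + \alpha^{L+2-i}\gamma_j)$, so that $e(D_1 D_3 D_4,\hat{C}_1) = e(g,\hat{g})^{\Phi t}$. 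The argument of $\tb{H}_2$ is then $e(g,\hat{g})^{t[\Theta(\alpha^{L+2-i}+r) - \Phi]}$, and it remains to show the bracket equals $\alpha^{L+2}$.

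The heart of the proof is this one exponent identity. Expanding $\Theta\cdot\alpha^{L+2-i}$ term by term produces exactly $\alpha^{2L+3-i}\omega$, $\alpha^{L+2-i}\beta$, and $\sum_{j\in S}(\alpha^{L+2-i+j} + \alpha^{L+2-i}\gamma_j)$: the $\omega$- and $\beta$-terms are cancelled by $D_3 = A_{2L+3-i}^\omega B_{L+2-i}$, the $\Theta r$ contribution of $\hat{D}_2$ is cancelled by the randomizer $(\,\cdot\,)^r$ hidden in $D_1$, and all cross-terms with $j\neq i$ are cancelled by $D_4$. What survives after subtracting $\Phi$ is the lone $j=i$ summand, which contributes $\alpha^{L+2-i+i} = \alpha^{L+2}$ together with a stray $\alpha^{L+2-i}\gamma_i$; this stray term is precisely absorbed by the private-key factor $K_i = g^{\alpha^{L+2-i}\gamma_i}$ sitting inside $D_1$. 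Hence the bracket collapses to $\alpha^{L+2}$, the argument of $\tb{H}_2$ equals $e(g,\hat{g})^{\alpha^{L+2}t} = \Omega^t$, and therefore $CK' = \tb{H}_2(\Omega^t) = CK$.

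The main obstacle is bookkeeping rather than conceptual depth. One must keep straight that $D_4$ ranges over $S\setminus\{i\}$ while $\Theta$ sums over all of $S$, so that only the $j=i$ term survives and its spurious $\gamma_i$ is exactly the one cancelled by $K_i$; along the way I would also confirm that every index invoked in $\tb{Decaps}$ ($\hat{A}_{L+2-i}$, $A_{2L+3-i}$, $B_{L+2-i}$, and $A_{L+2-i+j}$, $V_{j,L+2-i}$ for $j\in S\setminus\{i\}$) lies in the published range and avoids the missing index $L+2$, which holds because $i,j\in[L]$ and $j\neq i$. The two design features that make the cancellation work, namely randomizing the decryption key by $r$ and precomputing $D_3$ to absorb the $\omega$- and $\beta$-terms, are exactly where a mistaken index would break the telescoping, so I would verify each of those four cancellations explicitly before concluding.
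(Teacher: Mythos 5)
Your proposal is correct and follows essentially the same route as the paper: both proofs directly verify that $e(C_2,\hat{D}_2)\cdot e(D_1 D_3 D_4,\hat{C}_1)^{-1}$ collapses to $\Omega^t$ via the same term-by-term cancellations (the $r$-randomizer against $\hat{D}_2$, $D_3$ against the $\omega$- and $\beta$-terms, $D_4$ against the $j\neq i$ cross-terms, and $K_i$ against the stray $\alpha^{L+2-i}\gamma_i$), with your version merely carrying out the bookkeeping at the exponent level rather than by manipulating pairings. Your added checks (that the honest ciphertext passes the validity test, that all indices avoid the missing exponent $L+2$, and that honest keys pass $\tb{IsValid}$ identically in the $\delta$'s) are consistent with, and slightly more explicit than, what the paper records.
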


\begin{proof}
To show the correctness of this scheme, we first show that the same session
key can be derived. If $i \in S$, then we can derive the following equation
 	\begin{align*}
    &   e(C_2, \hat{A}_{L+2-i}) \\
    &=  e( \big( A_{L+1}^\omega B \prod_{j \in S} A_j V_j \big)^t,
		   \hat{A}_{L+2-i}) \\
    &=  e( \big( A_{L+1}^\omega B \cdot A_i V_i \cdot
		   \prod_{j \in S \setminus \{i\}} A_j V_j \big)^t, \hat{A}_{L+2-i} ) \\
    &=  e( (A_{L+1}^{\omega} B)^t, \hat{A}_{L+2-i} ) \cdot
		e( A_i^t, \hat{A}_{L+2-i} ) \cdot e( V_i^t, \hat{A}_{L+2-i} ) \cdot
    	e( \big( \prod_{j \in S \setminus \{i\}} A_j V_j \big)^t, \hat{A}_{L+2-i} ) \\
    &=  e( A_{2L+3-i}^\omega B_{L+2-i}, \hat{g}^t ) \cdot
		e( A_{L+2}, \hat{g}^t ) \cdot
		e( \hat{A}_{L+2-i}^{\gamma_i}, \hat{g}^t ) \cdot
    	e( \big( \prod_{j \in S \setminus \{i\}} A_{L+2-i+j} A_{L+2-i}^{\gamma_j},
		   \hat{g}^t \big) \\
    &=  e( D_3, \hat{g}^t ) \cdot \Omega^t \cdot e( K_i, \hat{g}^t, ) \cdot
		e( D_4, \hat{g}^t ) \\
    &=  \Omega^t \cdot e( K_i \cdot D_3 \cdot D_4, \hat{C}_1 ).
    \end{align*}
By using the above equation, we can obtain that the same element that is used
to derive a session key as follows
    \begin{align*}
    &   e(C_2, \hat{D}_2) \cdot e(D_1 \cdot D_3 \cdot D_4, \hat{C}_1)^{-1} \\
    &=  e(C_2, A_{L+2-i}) \cdot e(C_2, \hat{g}^r) \cdot
		e(D_1 \cdot D_3 \cdot D_4, \hat{C}_1)^{-1} \\
    &=  \Omega^t \cdot e(K_i \cdot D_3 \cdot D_4, \hat{C}_1) \cdot
		e( \big( A_{L+1}^\omega B \prod_{j \in S} A_j V_j \big)^t, \hat{g}^r ) \cdot
		e(D_1^{-1}, \hat{C}_1) \cdot e((D_3 \cdot D_4)^{-1}, \hat{C}_1) \\
    &=  \Omega^t \cdot e(K_i \cdot \big( A_{L+1}^\omega B \prod_{j \in S}
			A_j V_j \big)^r, \hat{C}_1) \cdot e(D_1^{-1}, \hat{C}_1)
     =  \Omega^t.
    \end{align*}

Next, we show that the verification of a public key is correct. Since our
public key verification uses the small exponent test, the validity of this
batch verification can be checked by the previous work of Camenisch et al.
\cite{CamenischHP12}, we omit the detailed analysis of this batch
verification.
\end{proof}

\begin{theorem} \label{thm:dbe-adcca-correct}
The above DBE$_{AD}$ scheme is correct.
\end{theorem}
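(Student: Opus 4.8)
The plan is to unwind $\tb{DBE}_{AD}.\tb{Decaps}$ on an honestly generated ciphertext and check that each of its stages succeeds, so that the final SKE decryption returns exactly the random session key $CK$ chosen during encapsulation. Fix $PP$, a set $S \subseteq [L]$ with a target index $i \in S$, honestly generated keys, and an honest output $(CH, CK) = \tb{DBE}_{AD}.\tb{Encaps}(S, \ldots)$ with $CH = (CM, \sigma, VK)$. Since $(SK, VK)$ and $\sigma = \tb{OTS.Sign}(SK, CM)$ are produced honestly, the correctness of the OTS scheme yields $\tb{OTS.Verify}(VK, \sigma, CM) = 1$, so decapsulation does not abort at its first step.

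The combinatorial heart of the argument, and the step I expect to be the main obstacle, is showing that the $\tb{DBE}_{SS}$ index for which the decryptor actually holds a private key always lies in the subset the decryptor selects. The user $i$ retains $USK_{SS,2i-u_i}$, while encapsulation forms $S_0 = \{ 2j - z_j \}_{j \in S}$ and $S_1 = \{ 2j - (1-z_j) \}_{j \in S}$. I would split on the decryption rule. If $z_i = u_i$ the decryptor takes $S' = S_0$, whose element for $j = i$ is $2i - z_i = 2i - u_i$; if $z_i \neq u_i$, i.e. $z_i = 1 - u_i$, the decryptor takes $S' = S_1$, whose element for $j = i$ is $2i - (1-z_i) = 2i - u_i$. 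Hence in both cases the held index $2i - u_i$ belongs to the chosen $S'$, and the accompanying $\tb{DBE}_{SS}$ ciphertext $CH'_{SS} = CH_{SS,b}$ (with $b=0$ in the first case, $b=1$ in the second) is precisely the one honestly encapsulated for $S'$ under the label $VK$. I would also note that every $k \in S'$ equals $2j$ or $2j-1$ for some $j \in S$, so the component $UPK_{SS,k}$ needed by $\tb{DBE}_{SS}.\tb{Decaps}$ is one of the two parts of $UPK_j$ and is thus available.

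With the index placement settled, the remainder reduces to the two assumed correctness properties. Because $CH'_{SS}$ was generated by $\tb{DBE}_{SS}.\tb{Encaps}(S', \ldots, VK)$ and the same label $VK$ (carried inside $CH$) is passed into $\tb{DBE}_{SS}.\tb{Decaps}$, the recomputed tag $\omega = \tb{H}_1(\hat{C}_1 \| VK)$ agrees with the one used at encryption, the internal integrity check passes, and Theorem~\ref{thm:dbe-sscca-correct} gives $CK'_{SS} = CK_{SS,b}$. Since $CT' = CT_b = \tb{SKE.Encrypt}(CK_{SS,b}, CK)$, the correctness of the SKE scheme yields $\tb{SKE.Decrypt}(CK_{SS,b}, CT_b) = CK$, which is exactly the session key produced by encapsulation, giving $CK' = CK$. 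The key-validity requirement is inherited directly, since $\tb{DBE}_{AD}.\tb{IsValid}$ merely invokes $\tb{DBE}_{SS}.\tb{IsValid}$ on the two underlying public keys, whose correctness is covered by Theorem~\ref{thm:dbe-sscca-correct}.
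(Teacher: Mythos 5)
Your proposal is correct and follows exactly the route the paper indicates but omits: reduce to the correctness of the DBE$_{SS}$, SKE, and OTS schemes, with the only nontrivial point being the case check that the retained index $2i-u_i$ lands in the selected subset $S'$ (and that the label $VK$ matches at encryption and decryption), which you verify correctly. No discrepancies to report.
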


The correctness of the DBE$_{AD}$ scheme can be easily derived from the
correctness of the DBE$_{SS}$, SKE, and OTS schemes. We omit the proof of this
theorem.

%

\section{Security Analysis}

In this section, we prove the semi-static CCA security and adaptive CCA
security of the proposed DBE schemes.

\subsection{Semi-Static Security Analysis}

The basic idea of the semi-static CCA proof is to use the partitioning
technique that divides the ciphertext header space into two regions: a
challenge ciphertext header region and a decryption oracle ciphertext header
region. The challenge ciphertext header $CH^*$ is associated with a tag
$\omega^*$, and the decryption oracle ciphertext $CH$ is associated with a
tag $\omega$. If $\omega^* \neq \omega$, a simulator first derives a
decryption key to be used for decrypting the ciphertext header $CH$ and
perform the decryption of $CH$ to handle the decryption query. In other
words, the simulator can decrypt the ciphertext header requested by the
attacker by using a method like the IBE private key derivation for an
identity $\omega$. If $\omega^* = \omega$ holds, the simulator can find the
collisions of the collision-resistant hash function, which ensures that such
an event will not occur. The detailed description of the semi-static CCA
proof is given as follows:

\begin{theorem} \label{thm:dbe-sscca-sec}
The above DBE$_{SS}$ scheme is SS-CCA secure if the $(L+1)$-BDHE assumption
holds and $\tb{H}_1$ is a collision-resistant hash function.
\end{theorem}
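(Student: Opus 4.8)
The plan is to reduce SS-CCA security to the $(L+1)$-BDHE assumption, letting the collision resistance of $\tb{H}_1$ absorb a single bad event. Given an instance $(g, g^a, \dots, g^{a^{L+1}}, g^c, \hat{g}, \hat{g}^a, \dots, \hat{g}^{a^{L+1}}, \hat{g}^{a^{L+3}}, \dots, \hat{g}^{a^{2L+2}}, Z)$ with $Z$ either $e(g,\hat{g})^{a^{L+2}c}$ or random, I would build a simulator $\mc{B}$ playing the SS-CCA game against $\mc{A}$. The decisive observation is that the scheme needs high powers of $\alpha$ in $\G$ (up to $2L+2$, skipping $L+2$) but only low powers in $\hat{\G}$, which is exactly the reverse of what the instance supplies. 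So I would \emph{swap the source groups}: identify the scheme's $\G$ with the instance's $\hat{\G}$ and the scheme's $\hat{\G}$ with the instance's $\G$, and set $\alpha = a$. This makes every $A_k$ (for $k \neq L+2$) and every $\hat{A}_k$ (for $k \leq L+1$) directly available, and $\Omega = e(A_1, \hat{A}_{L+1})$ computable, while the gap element $g^{\alpha^{L+2}}$ stays unknown, which is precisely the secret the session key hides.

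After $\mc{A}$ commits $\tilde{S}$, $\mc{B}$ fixes the challenge randomness to $t^* = c$, so $\hat{C}_1^* = \hat{g}^c$ is a fixed instance element; crucially this lets $\mc{B}$ compute the challenge tag $\omega^* = \tb{H}_1(\hat{g}^c \| -)$ already at setup. I would then program the parameters in Boneh--Boyen style: $\beta = -\alpha^{L+1}\omega^* + b$, so that $B = A_{L+1}^{-\omega^*} g^{b}$ and each $B_k = A_{k+L+1}^{-\omega^*} A_k^{b}$ are computable (the needed high powers $A_{L+3}, \dots, A_{2L+2}$ being present), and for every committed user $\gamma_j = -\alpha^{j} + y_j$ with known $y_j$. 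The public-key elements $V_j = A_j^{-1} g^{y_j}$, $\hat{V}_j$, and the $V_{j,k}$ are all computable precisely because the excluded index $k = L+2-j$ is the one that would force $\alpha^{L+2}$; the private key $K_j = g^{-\alpha^{L+2}} A_{L+2-j}^{y_j}$ is not, which is why $\mc{B}$ must decrypt without it. With these substitutions one checks that $A_{L+1}^{\omega} B \prod_{j\in S} A_j V_j = A_{L+1}^{\,\omega-\omega^*} g^{Y_S}$ with $Y_S = b + \sum_{j\in S} y_j$ known, so key-generation queries are answered honestly and the challenge header $(\hat{g}^c, (g^c)^{Y_{S^*}})$ with session key $\tb{H}_2(Z)$ is produced directly, correct when $Z = e(g,\hat{g})^{\alpha^{L+2}c}$ and random otherwise.

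The core of the argument is decryption. For a valid query with tag $\omega \neq \omega^*$, the factor $A_{L+1}^{\,\omega-\omega^*}$ carries a nonzero power of $\alpha^{L+1}$, so pairing $C_2$ with $\hat{A}_1 = \hat{g}^{\alpha}$ lifts it to $\alpha^{L+2}$; subtracting the known low-order contribution $e(A_1^{Y_S}, \hat{C}_1)$ and taking the $(\omega-\omega^*)^{-1}$ root yields $\Omega^{t}$, so $\mc{B}$ returns $\tb{H}_2$ of it without ever touching $K_i$ or the gap element. I would verify that this recovers the honest decryption value and, reassuringly, is independent of the queried index $i$, matching the scheme's correctness identity.

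The main obstacle, and the place where both hypotheses are spent, is the residual case $\omega = \omega^*$, where the $(\omega-\omega^*)$ factor vanishes and $\mc{B}$ can no longer introduce $\alpha^{L+2}$. I would split it. If $(\hat{C}_1 \| AU) \neq (\hat{g}^c \| -)$ then $\mc{B}$ has exhibited a collision of $\tb{H}_1$, bounded by its collision resistance. Otherwise $\hat{C}_1 = \hat{C}_1^*$ and $AU = AU^*$; an invalid ciphertext is simply rejected with $\perp$ (the honest behaviour), while a valid one must satisfy $C_2 = (g^{Y_S})^{c} = (g^c)^{Y_S}$. For $S = S^*$ this forces $CH = CH^*$, excluded in Phase~2; for $S \neq S^*$ producing such a valid $C_2$ would require the adversary to compute $\big( \prod_{j\in S} A_j V_j / \prod_{j\in S^*} A_j V_j \big)^{c} \in \G$ from $\hat{g}^c \in \hat{\G}$, which is infeasible in asymmetric groups with no efficient isomorphism, so no such valid query exists and rejection is again correct. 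Since $\mc{B}$ simulates perfectly except on the collision event, and the embedded $Z$ drives the challenge key, $\Adv_{\mc{A}}^{SS\text{-}CCA}$ is bounded by $\Adv^{BDHE}$ plus the collision-resistance term. I expect the bookkeeping of this $\hat{C}_1 = \hat{C}_1^*$ sub-case to be the most delicate part to write out rigorously.
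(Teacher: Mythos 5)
Your reduction is essentially the paper's: the same partitioning on the tag $\omega$, the same Boneh--Boyen--style programming $\beta = -\omega^*\alpha^{L+1} + \beta'$ with $\gamma_j = -\alpha^j + y_j$, the same choice $t^* = c$ fixed at setup so that $\omega^*$ is known in advance, the challenge key set to $\tb{H}_2(Z)$, and collision resistance of $\tb{H}_1$ invoked for the event $\omega = \omega^*$. Two execution details differ. First, for $\omega \neq \omega^*$ you answer decryption queries by computing $\Omega^t$ directly in $\G_T$ (pair $C_2$ with $\hat{A}_1$, divide off $e(A_1^{Y_S},\hat{C}_1)$, take the $(\omega-\omega^*)^{-1}$ power); the paper instead derives randomized decryption components $D_1,\hat D_2,D_3,D_4$ with implicit randomness $r = a/(\omega-\omega^*)+r'$ and runs the honest decryption formula. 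Both are valid once the pairing validity check has pinned down $C_2$; yours is shorter, the paper's more transparently matches the real algorithm's output distribution. Second, at $\omega=\omega^*$ the paper simply aborts and charges the entire event to a second reduction that outputs $(\hat{C}_1^*,\hat{C}_1\|AU)$ as a collision, whereas you split off the sub-case $\hat{C}_1\|AU = \hat{C}_1^*\|-$ where that pair is not a collision at all. Your instinct there is right --- this sub-case is the genuinely delicate one, and the paper glosses over it --- but your resolution of it is not yet a proof: for $S\neq S^*$ you assert that a valid $C_2=(g^c)^{Y_S}$ ``cannot exist'' because the adversary cannot compute it, which conflates existence with computability and rests on an informal appeal to the absence of isomorphisms rather than a reduction to a stated assumption (the adversary does hold $(g^c)^{Y_{S^*}}$ and all the $A_jV_j$, so ruling this out is a CDH-type statement that would need to be folded into the BDHE reduction or assumed separately). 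So: same architecture as the paper, a legitimate simplification in the decryption simulation, and a sharper but still incomplete treatment of the one corner case that the paper itself does not fully close.
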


\begin{proof}
Suppose there exists an adversary $\mc{A}$ that breaks the DBE$_{SS}$ scheme
with a non-negligible advantage. A simulator $\mc{B}_1$ that solves the
$L+1$-BDHE assumption using $\mc{A}$ is given: a challenge tuple $D = (g, g^a,
\ldots, g^{a^{L+1}}, g^{a^{L+3}}, \ldots, g^{2L+2}, g^b, \hat{g}, \hat{g}^a,
\ldots, \hat{g}^{a^{L+1}}, \hat{g}^b)$ and $Z$ where $Z = Z_0 =
e(g,\hat{g})^{a^{L+2} b}$ or $Z = Z_1 = e(g,\hat{g})^c$. Then $\mc{B}_1$ that
interacts with $\mc{A}$ is described as follows:

\svs\noindent \tb{Init:} $\mc{A}$ commits an initial set $\tilde{S}$.

\svs\noindent \tb{Setup:} $\mc{B}_1$ sets $\hat{C}_1^* = \hat{g}^b$ and computes
$\omega^* = \tb{H}_1(\hat{C}_1^*)$. It selects a random exponent $\beta'$ and
implicitly sets $\beta = -\omega^* a^{L+1} + \beta'$. It creates public
parameters
	\begin{align*}
	& g,~ \hat{g},~
	\big\{ A_k = g^{a^k} \big\}_{1 \leq k \neq L+2 \leq 2L+2},~
	\big\{ \hat{A}_k = \hat{g}^{a^k} \big\}_{1 \leq k \leq L+1},~ \\
	& B = A_{L+1}^{-\omega^*} g^{\beta'},~
	\big\{ B_k = A_{L+1+k}^{-\omega^*} A_k^{\beta'} \big\}_{2 \leq k \leq L+1},~
	\Omega = e(A_{L+1}, \hat{A}_1).
	\end{align*}

\svs\noindent \tb{Query Phase 1:} For each index $i \in \tilde{S}$, it selects
random $\gamma'_i \in \Z_N$ and creates a public key by implicitly setting
$\gamma_i = \gamma'_i - \alpha^i$ as
	\begin{align*}
	UPK_i = \big(
		V_j = A_j^{-1} g^{\gamma'_j},~
		\hat{V}_j = \hat{A}_j^{-1} \hat{g}^{\gamma'_j},~
		\big\{ V_{j,k} = A_{k+j}^{-1} A_{k}^{\gamma'_j}
		\big\}_{2 \leq k \neq L+2-j \leq L+1}
	\big).
	\end{align*}
It gives $\{ (j, UPK_j) \}_{j \in \tilde{S}}$ to $\mc{A}$.

\svs For a decryption query on $(S, CH = (\hat{C}_1, C_2), i, AU)$ such that
$S \subseteq \tilde{S}$, $\mc{B}_1$ proceeds as follows:
It first computes $\omega = \tb{H}_1(\hat{C}_1 \| AU)$ and checks the validity of
the ciphertext header by using pairing. If the check fails, then it responds
with $\perp$. If $\omega = \omega^*$, then it sets $\ts{Bad} = 1$ and aborts
the simulation.
Otherwise ($\omega \neq \omega^*$), it selects a random exponent $r' \in \Z_p$
and derives decryption components by implicitly setting $r = a/(\omega -
\omega^*) + r'$ as
	\begin{align*}
	D_1 &= A_{L+2-i}^{\gamma'_i} (A_1^{\beta'} \prod_{j \in S}
		A_{j+1} V_{j,1})^{ 1/(\omega - \omega^*) }
	 	\big( A_{L+1}^{\omega} B \prod_{j \in S} A_j V_j \big)^{r'},~ \\
	D_2 &= A_{L+2-i} A_1^{1/(\omega - \omega^*)} g^{r'},~ \\
	D_3 &= A_{2L+3-i}^\omega B_{L+2-i},~
	D_4 = \prod_{j \in S \setminus \{i\}} A_{L+2-i+j} V_{j,L+2-i}.
	\end{align*}
It responds a session key $CK = \tb{H}_2( e(C_2, D_2) \cdot e(D_1 \cdot D_3
\cdot D_4, \hat{C}_1)^{-1} )$ to $\mc{A}$.

\vs \noindent \tb{Challenge}: $\mc{A}$ submits a challenge set $S^* \subseteq
\tilde{S}$. $\mc{B}_1$ implicitly sets $t = b$ and creates a ciphertext tuple
	\begin{align*}
	CH^* = \big(
		\hat{C}_1^* = \hat{g}^b,~
		C_2^* = \big( g^b \big)^{\beta' + \sum_{j \in S^*} \gamma'_j}
	\big),~
	CK^* = \tb{H}_2(Z).
	\end{align*}
It sets $CK_0^* = CK^*$ and $CK_1^* = \tb{H}_2(R)$ by selecting a random
element $R \in \G_T$. It flips a random bit $\mu \in \bits$ and gives $(CH^*,
CK_\mu^*)$ to $\mc{A}$.

\svs \noindent \tb{Query Phase 2}: The decryption queries are handled as the
same as query phase 1.

\svs \noindent \tb{Guess}: Finally, $\mc{A}$ outputs a guess $\mu' \in
\bits$. $\mc{B}_1$ outputs $0$ if $\mu = \mu'$ or $1$ otherwise.

The decryption components are correctly distributed as
	\begin{align*}
	D_1
	&=	K_i \big( A_{L+1}^\omega B \prod_{j \in S} A_j V_j \big)^r
	 = 	A_{L+2-i}^{\gamma'_i - \alpha^i}
	 	\big( A_{L+1}^{\omega - \omega^*} g^{\beta'}
		\prod_{j \in S} A_j V_j \big)^{a/(\omega - \omega^*) + r'} \\
	&= 	A_{L+2-i}^{\gamma'_i} A_{L+2}^{-1}
		A_{L+2} \big( g^{\beta'} \prod_{j \in S} A_j V_j \big)^{
			a/(\omega - \omega^*) }
		\big( A_{L+1}^{\omega} B \prod_{j \in S} A_j V_j \big)^{r'} \\
	&=	A_{L+2-i}^{\gamma'_i} (A_1^{\beta'} \prod_{j \in S} A_{j+1} V_{j,1})^{
			1/(\omega - \omega^*) }
	 	\big( A_{L+1}^{\omega} B \prod_{j \in S} A_j V_j \big)^{r'},~ \\
	D_2
	&= 	A_{L+2-i} g^r = A_{L+2-i} g^{a/(\omega - \omega^*) + r'}
	 =	A_{L+2-i} A_1^{1/(\omega - \omega^*)} g^{r'}.
	\end{align*}
The challenge ciphertext header are also correctly distributed as
	\begin{align*}
	C_2^*
	&=	\big( A_{L+1}^{\omega^*} B \prod_{j \in S^*} A_j V_j \big)^t
	 =	\big( A_{L+1}^{\omega^*} A_{L+1}^{-\omega^*} g^{\beta'}
	 	\prod_{j \in S^*} g^{\alpha^j} g^{\gamma'_j - \alpha^j} \big)^t
	 =	\big( g^t \big)^{\beta' + \sum_{j \in S^*} \gamma'_j}.
	\end{align*}

Suppose there exists an adversary $\mc{A}$ that breaks the DBE$_{SS}$ scheme
with a non-negligible advantage. A simulator $\mc{B}_2$ that breaks the
CRHF using $\mc{A}$ is given the description of $\tb{H}_1$.
Then $\mc{B}_2$ that interacts with $\mc{A}$ is described as follows:

\svs\noindent \tb{Init:} $\mc{A}$ commits an initial set $\tilde{S}$.

\svs\noindent \tb{Setup:} $\mc{B}_2$ creates public parameters by running the
setup algorithm except that $\tb{H}_1$ is replaced by the given hash function. It
selects a random exponent $t \in \Z_p$, sets $\hat{C}_1^* = \hat{g}^t$, and
computes $\omega^* = \tb{H}_1(\hat{C}_1^*)$.

\svs\noindent \tb{Query Phase 1:} For all $j \in \tilde{S}$, $\mc{B}_2$ generates
$(USK_j, UPK_j)$ by simply running the key generation algorithm. It gives
$\{ (j, UPK_j) \}_{j \in \tilde{S}}$ to $\mc{A}$.
For a decryption query on $(S, CH = (\hat{C}_1, C_2), i, AU)$, it first
computes $\omega = \tb{H}_1(\hat{C}_1 \| AU)$. If $\omega = \omega^*$, then it sets
$\ts{Bad} = 1$ and outputs a collision pair $(\hat{C}_1^*, \hat{C}_1 \| AU)$.
Otherwise, it responds with a session key by simply running the decryption
algorithm.

\svs\noindent \tb{Challenge:} To create a ciphertext $(CH^*, CK^*)$,
$\mc{B}_2$ simply runs the encapsulation algorithm except that it uses the
exponent $t$ chosen in the setup. It sets $CK_0^* = CK^*$ and $CK_1^* = RK$
by selecting a random $RK$. It flips a coin $\mu \in \bits$ and gives $(CH^*,
CK_\mu^*)$ to $\mc{A}$.

\svs \noindent \tb{Query Phase 2}: The decryption queries are handled as the
same as query phase 1.

\svs\noindent \tb{Guess:} $\mc{B}_2$ outputs $\perp$.

By combining the results of the above simulations, we obtain the following
equation
	\begin{align*}
	\Adv_{DBE}^{SS\text{-}CCA}(\lambda)
	&= \Pr[\mu = \mu'] - 1/2 \\
	&= \Pr[\neg \ts{Bad}] \cdot \Pr[\mu = \mu' | \neg \ts{Bad}] +
			\Pr[\ts{Bad}] \cdot \Pr[\mu = \mu' | \ts{Bad}] - 1/2 \\
	&\leq \Pr[\mu = \mu' | \neg \ts{Bad}] - 1/2 + \Pr[\ts{Bad}]
	 \leq \Adv_{B_1}^{BDHE}(\lambda) + \Adv_{B_2}^{CRHF}(\lambda).
	\end{align*}
This completes our proof.
\end{proof}

\subsection{Adaptive Security Analysis}

The basic idea of the adaptive CCA proof is to change the challenge session
key to a random value through hybrid games so that the attacker can never
distinguish the coin thrown by the challenger. To do this, we first play
hybrid games to change the secret keys used for $CT_0^*, CT_1^*$ in the
challenge ciphertext header $CH^*$ to random values by using the semi-static
CCA security of the underlying DBE$_{SS}$ scheme. Then, we play hybrid games
to change the messages of $CT_0^*, CT_1^*$ to random values by using the OMI
security of the underlying SKE scheme. In the last game, the challenge
session key $CK_\mu^*$ is not related to the challenge ciphertext header
$CH^*$. The detailed description of the adaptive CCA proof is given as
follows:

\begin{theorem}[Adaptive CCA Security]
The above DBE$_{AD}$ scheme is AD-CCA secure if the DBE$_{SS}$ scheme is SS-CCA
secure and the OTS scheme is strongly unforgeable.
\end{theorem}

\begin{proof}
The security proof consists of a sequence of hybrid games $\tb{G}_0, \tb{G}_1,
\ldots, \tb{G}_4$. The first game will be the original adaptive CCA security
game and the last one will be a game in which an adversary has no advantage.
We define the games as follows:
\begin{description}
\item [\tb{Game} $\tb{G}_0$.] This game is the original security game defined
in Section \ref{def:dbe-adcca-sec}. That is, the simulator of this game simply
follows the honest algorithms.

\item [\tb{Game} $\tb{G}_1$.] This game is the same as the game $\tb{G}_0$
    except that the ciphertext $CT_0$ in $CH^*$ is created as $CT_0 =
    \tb{SKE.Encrypt}(RK_{SS,0}, CK_0)$ by using a random $RK_{SS,0}$
    instead of a valid $CK_{SS,0}$.

\item [\tb{Game} $\tb{G}_2$.] This game is also similar to the game $\tb{G}_1$
except that the ciphertext $CT_1$ in $CH^*$ is created as $CT_1 =
\tb{SKE.Encrypt}(RK_{SS,1}, CK_0)$ by using random $RK_{SS,1}$ instead of a
valid $CK_{SS,1}$.

\item [\tb{Game} $\tb{G}_3$.] This game is the same as the game $\tb{G}_2$
except that the ciphertext $CT_0$ in $CH^*$ is generated as $CT_0 =
\tb{SKE.Encrypt}(RK_{SS,0}, RK_0)$ by selecting a random $RK_0$ instead of a
valid $CK$.

\item [\tb{Game} $\tb{G}_4$.] This final game $\tb{G}_4$ is also similar to
the game $\tb{G}_3$ except that the ciphertext $CT_1$ in $CH^*$ is
generated as $CT_1 = \tb{SKE.Encrypt}(RK_{SS,1}, RK_1)$ by selecting a
random $RK_1$ instead of a valid $CK$. In this final game, $CH^*$ is not
related to the session key $CK_\mu^*$. Thus, the advantage of $\mc{A}$ is
zero.
\end{description}
Let $\Adv_{\mc{A}}^{G_j}$ be the advantage of $\mc{A}$ in the game $\tb{G}_j$.
We have that $\Adv_{DBE,\mc{A}}^{AD\text{-}CCA}(\lambda) =
\Adv_{\mc{A}}^{G_0}$, and $\Adv_{\mc{A}}^{G_4} = 0$. From the following Lemmas
\ref{lem:dbe-adcca-g0-g1}, \ref{lem:dbe-adcca-g1-g2},
\ref{lem:dbe-adcca-g2-g3}, and \ref{lem:dbe-adcca-g3-g4}, we obtain the
equation
	\begin{align*}
    \Adv_{DBE}^{AD\text{-}CCA}(\lambda)
    &\leq \sum_{j=1}^4 \big| \Adv_{\mc{A}}^{G_{j-1}} - \Adv_{\mc{A}}^{G_j} \big|
	+ \Adv_{\mc{A}}^{G_4} \\
    &\leq 2 \Adv_{DBE}^{SS\text{-}CCA}(\lambda) +
	 2 \Adv_{OTS}^{SUF}(\lambda) +
	 2 \Adv_{SKE}^{OMI}(\lambda).
    \end{align*}
This completes the proof.
\end{proof}

\begin{lemma} \label{lem:dbe-adcca-g0-g1}
If the DBE$_{SS}$ scheme is SS-CCA secure and the OTS scheme is SUF secure,
then no PPT adversary can distinguish $\tb{G}_0$ from $\tb{G}_1$ with a
non-negligible advantage.
\end{lemma}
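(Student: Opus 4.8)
The plan is to build a reduction $\mc{B}$ that plays the SS-CCA game against the underlying $\tb{DBE}_{SS}$ scheme while internally simulating the adaptive CCA game for $\mc{A}$, arranged so that a real $\tb{DBE}_{SS}$ challenge session key makes $\mc{B}$ present exactly $\tb{G}_0$ and a random one makes it present exactly $\tb{G}_1$. Thus distinguishing a real key from a random one translates directly into distinguishing $\tb{G}_0$ from $\tb{G}_1$, and the strong unforgeability of the OTS is used only to rule out a single bad decryption event that would otherwise spoil the embedding.

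First I would exploit the GW doubling so that $\mc{B}$ can answer both corruption and decryption queries despite holding no private keys from its own challenger. Before receiving $PP$, $\mc{B}$ draws random bits $\{c_i\}_{i \in [L]}$ and commits the initial set $\tilde{S} = \{2i - c_i : i \in [L]\}$ to the SS-CCA challenger, obtaining the public keys $\{UPK_{SS,j}\}_{j \in \tilde{S}}$, while it generates the complementary slots $\{2i - (1-c_i)\}_{i \in [L]}$ itself so that it possesses those private keys; it also generates an OTS pair $(SK^*, VK^*)$ for the challenge. For each user it sets $UPK_i = (UPK_{SS,2i}, UPK_{SS,2i-1})$ and fixes the erased-slot bit $u_i = 1 - c_i$, so the retained private key $USK_{SS,2i-u_i}$ always lies in a self-generated slot. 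This lets $\mc{B}$ answer key-generation queries by combining, corruption queries by returning the self-generated $USK_{SS,2i-u_i}$ (consistent because challenge users are never corrupted), and decryption queries honestly: for any receiver $i$ the active $\tb{DBE}_{SS}$ index is $2i - u_i$ whether the $CT_0$ or the $CT_1$ branch is taken, so $\mc{B}$ simply runs $\tb{DBE}_{AD}.\tb{Decaps}$ with its own key.

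Next I would embed the SS challenge into $CT_0^*$. When $\mc{A}$ submits $S^*$, $\mc{B}$ sets the routing bits $z_j^* = c_j$ for $j \in S^*$; this matches the honest distribution because the $c_j$ of challenge users have never been exposed. Then $S_0^* = \{2j - z_j^*\}_{j \in S^*} = \{2j - c_j\}_{j \in S^*} \subseteq \tilde{S}$, so $\mc{B}$ requests its own SS challenge for $S_0^*$ (with $VK^*$ as the label), obtaining $(CH_{SS,0}^*, K^*)$ where $K^*$ is either the true key $CK_{SS,0}$ or an independent random key. It produces $CH_{SS,1}^*, CK_{SS,1}^*$ honestly (the $S_1^*$ slots are self-generated), forms $CT_0^* = \tb{SKE.Encrypt}(K^*, CK)$ and $CT_1^*$ honestly, signs $CM^*$ under $SK^*$, and returns the AD challenge. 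A real $K^*$ reproduces $\tb{G}_0$ and a random $K^*$ reproduces $\tb{G}_1$, so $\mc{B}$'s SS-CCA advantage captures $\big| \Pr[\mu = \mu' \text{ in } \tb{G}_0] - \Pr[\mu = \mu' \text{ in } \tb{G}_1] \big|$ up to the bad event below.

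The step I expect to be the main obstacle is showing that a second-phase decryption query cannot recover $CK_{SS,0}$, the only quantity that differs between the two games. This is where the OTS enters. I would define the event $\ts{Forge}$ that $\mc{A}$ submits a query with $CH = (CM, \sigma, VK^*)$, $(CM, \sigma) \neq (CM^*, \sigma^*)$, and $\tb{OTS.Verify}(VK^*, \sigma, CM) = 1$; since $\mc{A}$ ever sees only the single signature $\sigma^*$, such a query is a strong forgery, so $\Pr[\ts{Forge}] \leq \Adv_{OTS}^{SUF}(\lambda)$. Conditioned on $\neg\ts{Forge}$, every admissible second-phase query carries $VK \neq VK^*$, because reusing $VK^*$ without forging forces $CH = CH^*$, which is disallowed. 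The embedded ciphertext $CH_{SS,0}^*$ is bound both to the set $S_0^*$ and to the label $VK^*$ through its tag $\omega^* = \tb{H}_1(\hat{C}_1^* \| VK^*)$, so under a fresh label it cannot be reused or re-randomized to pass the $\tb{DBE}_{SS}$ pairing check along the $CT_0$ branch; any such attempt either returns $\perp$ or would recover $CK_{SS,0} = \tb{H}_2(\Omega^{t^*})$ through a slot $\mc{B}$ controls, and recovering it would let $\mc{B}$ itself break SS-CCA security by comparing against $K^*$. Hence under $\neg\ts{Forge}$ the simulation is faithful, giving $\big| \Adv_{\mc{A}}^{G_0} - \Adv_{\mc{A}}^{G_1} \big| \leq \Adv_{DBE}^{SS\text{-}CCA}(\lambda) + \Adv_{OTS}^{SUF}(\lambda)$, as required for the main theorem.
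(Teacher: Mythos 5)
There is a genuine gap in how you simulate decryption queries, and it appears precisely at the point where the paper needs the \emph{decryption oracle} of the SS-CCA game. You fix $u_i = 1 - c_i$ for every user so that the retained key $USK_{SS,2i-u_i}$ always sits in a self-generated slot, and you set $z_j^* = c_j$ in the challenge. But the bit $u_i$ is observable to the adversary through decryption queries: it can submit a header whose two branches behave differently (e.g., $CH_{SS,0}$ valid for $S_0$ but $CH_{SS,1}$ malformed, or $CT_0, CT_1$ encrypting different messages, all signed under a fresh OTS key of its own), and the answer reveals which branch was taken, i.e., whether $z_i = u_i$, hence $u_i$ itself. A user $i \in DQ$ may still appear in $S^*$ (only $CQ$ is excluded from the challenge set), so the adversary can then read $z_i^*$ off $CM^*$ and test whether $z_i^* = 1 - u_i$. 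In your simulation this holds with probability $1$, whereas in the real games $\tb{G}_0$ and $\tb{G}_1$ the routing bits $z_j^*$ are fresh uniform coins independent of $u_j$, so it holds with probability $1/2$. Your simulated games are therefore detectably different from the real ones, and the adversary's distinguishing advantage does not transfer to your SS-CCA advantage.

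The paper's reduction avoids this by decoupling the two roles of the bit: it fixes $u_i = 1 - z_i^*$ only for \emph{corrupted} users (who can never enter $S^*$ or $DQ$), but samples $u_i$ uniformly at random the first time user $i$ is the target of a decryption query. When that coin lands on $u_i = z_i^*$, the active slot $2i - u_i$ lies in $\tilde{S}_{SS}$, for which the simulator holds no private key, and it forwards that query to the SS-CCA challenger's decryption oracle. This is exactly why the underlying scheme must be SS-\emph{CCA} rather than merely SS-CPA secure; the fact that your reduction never calls that oracle is the symptom of the problem. Your treatment of the $VK = VK^*$ event via strong unforgeability matches the paper's second simulator and is fine.
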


\begin{proof}
Suppose there exists an adversary $\mc{A}$ that distinguishes $\tb{G}_0$ from
$\tb{G}_1$ with a non-negligible advantage. Let $\mc{C}$ be the challenger of
the DBE$_{SS}$ scheme with an input $1^{2L}$. The simulator $\mc{B}_1$ that
breaks the DBE$_{SS}$ scheme by using $\mc{A}$ with an input $1^L$ is
described as follows:

\vs\noindent \tb{Setup:} In this step, $\mc{B}_1$ proceeds as follows:
	\begin{enumerate}
	\item It chooses random bits $z^* = \{ z_j^* \}_{j \in [L]}$ where $z_j
	\in \bits$ and defines an initial set $\tilde{S}_{SS} = \{ 2j - z_j^*
	\}_{j=1}^{L}$.
	It also generates $(SK^*, VK^*)$ by running $\tb{OTS.GenKey} (1^\lambda)$.

	\item It commits $\tilde{S}_{SS}$ to $\mc{C}$ and receives $PP_{SS}$.
	It sets $PP = PP_{SS}$ and gives $PP$ to $\mc{A}$.

	\item Next, it requests a key generation query to $\mc{C}$ and
	receives all public keys $\{ (k, UPK_{SS,k}) \}_{k \in \tilde{S}_{SS}}$.
	\end{enumerate}

\svs\noindent \tb{Query Phase 1:} For the key generation query of $\mc{A}$ on
an index $i \in [L]$ such that $i \notin KQ$, $\mc{B}_1$ proceeds as follows:
	\begin{enumerate}
	\item It sets $k = 2i - (1-z_i^*)$ and obtains $(USK_{SS,k}, UPK_{SS,k})$
	by running $\tb{DBE}_{SS}.\tb{GenKey} (k, PP_{SS})$. It stores $USK_{SS,k}$
	to $KL$.

	\item It sets $UPK_i = (UPK_{SS,2i}, UPK_{SS,2i-1})$, adds $i$ to $KQ$,
	and gives $UPK_i$ to $\mc{A}$.
	\end{enumerate}

\svs For the key corruption query of $\mc{A}$ on an index $i \in [L]$ such
that $i \in KQ \setminus DQ$, $\mc{B}_1$ proceeds as follows:
	\begin{enumerate}
	\item It fixes $u_i = 1-z_i^*$ and retrieves $USK_{SS,2i-u_i}$ from $KL$.

	\item It sets $USK_i = (USK_{SS, 2i-u_i}, u_i)$, adds $i$ to $CQ$, and
	gives $USK_i$ to $\mc{A}$.
	\end{enumerate}

\svs For the decryption query of $\mc{A}$ on $(S, CH = (CM, \sigma, VK), i,
AU)$ such that $S \subseteq KQ \setminus CQ$ where $CM = (CH_{SS,0}, CH_{SS,1},
\lb CT_{0}, CT_{1}, z)$, $\mc{B}_1$ proceeds as follows:
	\begin{enumerate}
	\item If $VK = VK^*$, then it sets $\ts{Bad} = 1$ and aborts the
	simulation.

	\item It first checks $1 \stackrel{?}{=} \tb{OTS.Verify} (CM, \sigma,
	VK)$. If this check fails, it gives $\perp$ to $\mc{A}$.

	\item If $(i, u_i) \in UL$, then it retrieves $u_i$. Otherwise, It selects
	a random bit $u_i \in \bits$ and adds $(i, u_i)$ to $UL$.

	\item It defines $S_0 = \{ 2j - z_j \}_{j \in S}$ and $S_1 = \{ 2j -
	(1-z_j) \}_{j \in S}$. If $z_i = u_i$, then it sets $S' = S_0, CH'_{SS} =
	CH_{SS,0}, CT' = CT_{0}$. Otherwise, it sets $S' = S_1, CH'_{SS} =
	CH_{SS,1}, CT' = CT_{1}$.

	\item If $u_i = 1-z_i^*$, then it retrieves $USK_{SS,2i-u_i}$ from $KL$
	and obtains $CK'_{SS}$ by running $\tb{DBE}_{SS}.\tb{Decaps} (S', \lb CH'_{SS},
	2i-u_i, USK_{SS,2i-u_i}, \{ (k, UPK_{SS,k}) \}_{k \in S'}, PP_{SS}, VK)$.
	Otherwise ($u_i = z_i^*$), it requests a decryption query on $(S',
	CH'_{SS}, 2i-u_i, VK)$ to $\mc{C}$ and receives $CK'_{SS}$.

	\item It derives $CK$ by running $\tb{SKE.Decrypt} (CT', CK'_{SS})$.
	It adds $i$ to $DQ$ and gives $CK$ to $\mc{A}$.
	\end{enumerate}

\svs\noindent \tb{Challenge:} $\mc{A}$ submits a challenge set $S^* \subseteq
[L]$ such that $S^* \subseteq KQ \setminus CQ$, $\mc{B}_1$ proceeds as follows:
	\begin{enumerate}
	\item It defines $S_0^* = \{ 2j - z_j^* \}_{j \in S^*}$ and $S_1^* =
	\{ 2j - (1-z_j^*) \}_{j \in S^*}$.
	It submits a challenge set $S_{0}^*$ to $\mc{C}$ and receives
	$(CH_{SS,0}^*, CK_{SS,0}^*)$. It obtains $(CH_{SS,1}^*, CK_{SS,1}^*)$
	by running $\tb{DBE}_{SS}.\tb{Encaps}(S_1^*, \{ (k, UPK_{SS,k}) \}_{k \in
	S_1^*}, \lb PP_{SS}, VK^*)$.

	\item It selects a session key $CK^*$. It obtains $CT_0^*$ and $CT_1^*$
	by running $\tb{SKE.Encrypt}(CK_{SS,0}^*, CK^*)$ and $\tb{SKE.Encrypt}
	(CK_{SS,1}^*, CK^*)$ respectively.

	\item It sets $CM^* = (CH_{SS,0}^*, CH_{SS,1}^*, CT_0^*, CT_1^*, z^*)$
	and calculates $\sigma^*$ by running $\tb{OTS.Sign} (SK^*, CM^*)$.

	\item It sets $CH^* = (CM^*, \sigma^*, VK^*)$, $CK_0^* = CK^*$, and
	$CK_1^* = RK$ by selecting a random $RK$. Next, it flips a random coin
	$\mu \in \bits$ and gives $(CH^*, CK_\mu^*)$ to $\mc{A}$
	\end{enumerate}

\svs\noindent \tb{Query Phase 2:} For a decryption query on $(S, CH = (CM,
\sigma, VK), i, AU)$ such that $S \subseteq S^* \wedge CH \neq CH^*$,
$\mc{B}_1$ handles this decryption query as the same as query phase 1.

\svs\noindent \tb{Guess:} $\mc{A}$ outputs a guess $\mu'$. If $\mu = \mu'$,
then $\mc{B}_1$ outputs 1. Otherwise, it outputs 0.

Suppose there exists an adversary $\mc{A}$ that breaks the DBE$_{SS}$ scheme
with a non-negligible advantage. A simulator $\mc{B}_2$ that forges the
OTS scheme using $\mc{A}$ is given $VK^*$.
Then $\mc{B}_2$ that interacts with $\mc{A}$ is described as follows:

\svs\noindent \tb{Setup:} $\mc{B}_2$ creates public parameters by running the
setup algorithm.

\svs\noindent \tb{Query Phase 1:} For a key generation query, $\mc{B}_2$ generates
$(USK_j, UPK_j)$ by simply running the key generation algorithm. It gives
$UPK_j$ to $\mc{A}$.
For a key corruption query, it simply responds with $USK_j$ to $\mc{A}$.
For a decryption query on $(S, CH = (CM, \sigma, VK), i, AU)$, it first verify
the validity of $\sigma$ by running the verification algorithm of OTS.
If $VK = VK^*$, then it sets $\ts{Bad} = 1$ and outputs a forgery $(\sigma,
CM)$. Otherwise, it responds with a session key by simply running the decryption
algorithm.

\svs\noindent \tb{Challenge:} To create a ciphertext $(CH^*, CK^*)$,
$\mc{B}_2$ simply runs the encapsulation algorithm except that it uses the
signing oracle of OTS to create $\sigma^*$. It sets $CK_0^* = CK^*$ and
$CK_1^* = RK$ by selecting a random $RK$. It flips a coin $\mu \in \bits$ and
gives $(CH^*, CK_\mu^*)$ to $\mc{A}$.

\svs \noindent \tb{Query Phase 2}: The decryption queries are handled as the
same as query phase 1.

\svs\noindent \tb{Guess:} $\mc{B}_2$ outputs $\perp$.

By combining the analysis of two simulators, we obtain the following equation
	\begin{align*}
	\Adv_{\mc{A}}^{G_0} - \Adv_{\mc{A}}^{G_1}
	 \leq \Adv_{DBE}^{SS-CCA}(\lambda) + \Adv_{OTS}^{SUF}(\lambda).
	\end{align*}
This completes our proof.
\end{proof}

\begin{lemma} \label{lem:dbe-adcca-g1-g2}
If the DBE$_{SS}$ scheme is SS-CCA secure and the OTS scheme is SUF secure,
then no PPT adversary can distinguish $\tb{G}_1$ from $\tb{G}_2$ with a
non-negligible advantage.
\end{lemma}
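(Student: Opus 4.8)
The plan is to follow the proof of Lemma~\ref{lem:dbe-adcca-g0-g1} almost verbatim, swapping the roles of the two halves of each user key, because the step from $\tb{G}_1$ to $\tb{G}_2$ randomizes the symmetric key $CK_{SS,1}$ of the second branch rather than $CK_{SS,0}$ of the first. First I would construct a simulator $\mc{B}_1$ that interacts with the SS-CCA challenger $\mc{C}$ of the DBE$_{SS}$ scheme on input $1^{2L}$ and distinguishes $\tb{G}_1$ from $\tb{G}_2$ exactly when $\mc{C}$'s hidden coin flips. In the setup, $\mc{B}_1$ samples bits $z^* = \{ z_j^* \}_{j \in [L]}$, runs $\tb{OTS.GenKey}(1^\lambda)$ to obtain $(SK^*, VK^*)$, and commits the initial set $\tilde{S}_{SS} = \{ 2j - (1-z_j^*) \}_{j=1}^{L}$ (the second-branch indices) to $\mc{C}$, retrieving the public keys for $\tilde{S}_{SS}$. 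For a key generation query on $i$, it runs $\tb{DBE}_{SS}.\tb{GenKey}$ locally on the first-branch index $k = 2i - z_i^*$, stores $USK_{SS,k}$, and returns $UPK_i = (UPK_{SS,2i}, UPK_{SS,2i-1})$.

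For key corruption on $i$, $\mc{B}_1$ fixes $u_i = z_i^*$ so that the revealed private key $USK_{SS,2i-u_i} = USK_{SS,2i-z_i^*}$ is precisely the first-branch key it generated itself; this is sound because $\mc{C}$ never holds that key. For a decryption query on $(S, CH, i, AU)$ with $CH = (CM, \sigma, VK)$, it aborts with $\ts{Bad} = 1$ if $VK = VK^*$, otherwise checks $\tb{OTS.Verify}$, and then selects (or reuses from a list $UL$) the bit $u_i$: when $u_i = z_i^*$ it decrypts the selected branch locally with its stored first-branch key, and when $u_i = 1-z_i^*$ it forwards the corresponding SS-subquery, whose SS-index $2i-(1-z_i^*)$ lies in $\tilde{S}_{SS}$, to the decryption oracle of $\mc{C}$ before applying $\tb{SKE.Decrypt}$. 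The deferred, per-user choice of $u_i$ stays consistent because the AD-CCA constraints force the corruption world and the decryption world for any single user to be disjoint ($i \in KQ \setminus (CQ \cup DQ)$ for corruption, and $i \in DQ$ afterwards for decryption), so no user ever needs both a fixed and a fresh $u_i$. In the challenge phase, $\mc{B}_1$ submits $S_1^* = \{ 2j - (1-z_j^*) \}_{j \in S^*}$ to $\mc{C}$, receives $(CH_{SS,1}^*, CK_{SS,1}^*)$, generates the first-branch header $CH_{SS,0}^*$ honestly but, exactly as in both $\tb{G}_1$ and $\tb{G}_2$, forms $CT_0^*$ under a freshly sampled random key $RK_{SS,0}$; when $\mc{C}$ returns the real key the view is $\tb{G}_1$, and when it returns a random key the view is $\tb{G}_2$.

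Exactly as in Lemma~\ref{lem:dbe-adcca-g0-g1}, the abort event $VK = VK^*$ is bounded by a second reduction $\mc{B}_2$ to the strong unforgeability of the OTS scheme: $\mc{B}_2$ embeds the given $VK^*$ into the challenge ciphertext, answers any decryption query carrying $VK = VK^*$ by outputting the pair $(\sigma, CM)$ as a forgery, and otherwise runs the honest algorithms. Because every Query Phase~2 decryption query must satisfy $CH \neq CH^*$ while reusing $VK^*$, an accepting such query yields $(CM, \sigma) \neq (CM^*, \sigma^*)$, a valid strong forgery. Combining the two simulators gives the bound
	\begin{align*}
	\Adv_{\mc{A}}^{G_1} - \Adv_{\mc{A}}^{G_2}
		\leq \Adv_{DBE}^{SS\text{-}CCA}(\lambda) + \Adv_{OTS}^{SUF}(\lambda).
	\end{align*}

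The step I expect to be the main obstacle is the faithfulness of the decryption routing. I must confirm that every SS-subquery forwarded to $\mc{C}$ is admissible in the SS-CCA model (its SS-index lies in $\tilde{S}_{SS}$ and its header differs from the SS-challenge header) and, more delicately, that the deferred random choice of $u_i$ reproduces the true distribution of the hidden branch bit even for malformed ciphertexts whose two branches would decrypt to different session keys. The disjointness of the corruption and decryption worlds, together with the binding of $CM$ by the one-time signature, is what makes this routing both well-defined and distributionally correct.
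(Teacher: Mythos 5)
Your proposal matches the paper's proof: the paper likewise reuses the reduction of Lemma~\ref{lem:dbe-adcca-g0-g1} with the roles of the two branches swapped, committing $\tilde{S}_{SS} = \{2j-(1-z_j^*)\}_{j=1}^{L}$, generating the keys for indices $2i - z_i^*$ locally, submitting $S_1^*$ as the SS-CCA challenge set while encrypting $CT_0^*$ under a fresh random $RK_{SS,0}$, and bounding the $VK = VK^*$ abort by a second reduction to the strong unforgeability of the OTS, yielding the same bound $\Adv_{DBE}^{SS\text{-}CCA}(\lambda) + \Adv_{OTS}^{SUF}(\lambda)$. The admissibility and $u_i$-consistency concerns you raise at the end are reasonable points of care, but they do not change the argument, which is essentially identical to the paper's.
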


\begin{proof}
The proof of this lemma is almost the same as that of Lemma
\ref{lem:dbe-adcca-g0-g1} except that a simulator $\mc{B}_1$ that breaks the
DBE$_{SS}$ scheme defines the initial set $\tilde{S}_{SS} = \{ 2j - (1-z_j^*)
\}_{j=1}^L$ in the setup phase. Because of this change, the key generation,
key corruption, decryption queries are slightly changed. In the challenge
step, the set $S_1^*$ become the challenge set. The detailed description of
the challenge setup is given as follows:

\svs\noindent \tb{Challenge:} $\mc{A}$ submits a challenge set $S^* \subseteq
[L]$ such that $S^* \subseteq KQ \setminus CQ$, $\mc{B}$ proceeds as follows:
	\begin{enumerate}
	\item It defines $S_0^* = \{ 2j - z_j^* \}_{j \in S^*}$ and $S_1^* =
	\{ 2j - (1-z_j^*) \}_{j \in S^*}$.
	It obtains $(CH_{SS,0}^*, CK_{SS,0}^*)$ by running $\tb{DBE}_{SS}.\tb{Encaps}
	(S_0^*, \{ (k, UPK_{SS,k}) \}_{k \in S_0^*}, PP_{SS}, VK^*)$.
	It submits a challenge set $S_1^*$ to $\mc{C}$ and receives $(CH_{SS,1}^*,
	CK_{SS,1}^*)$.
	It selects a random $RK_{SS,0}$.

	\item It selects a session key $CK^*$. It obtains $CT_0^*$ and $CT_1^*$
	by running $\tb{SKE.Encrypt}(RK_{SS,0}, CK^*)$ and $\tb{SKE.Encrypt}
	(CK_{SS,1}^*, CK^*)$ respectively.

	\item It sets $CM^* = (CH_{SS,0}^*, CH_{SS,1}^*, CT_0^*, CT_1^*, z^*)$
	and calculates $\sigma^*$ by running $\tb{OTS.Sign} (SK^*, CM^*)$.

	\item It sets $CH^* = (CM^*, \sigma^*, VK^*)$, $CK_0^* = CK^*$, and
	$CK_1^* = RK$ by selecting a random $RK$. Next, it flips a random coin
	$\mu \in \bits$ and gives $(CH^*, CK_\mu^*)$ to $\mc{A}$
	\end{enumerate}

The description of a simulator $\mc{B}_2$ that breaks the OTS scheme is the
same as that of Lemma \ref{lem:dbe-adcca-g0-g1}. We omit the description of
this simulator.

Similar to the analysis of Lemma \ref{lem:dbe-adcca-g0-g1}, we obtain the
following equation
	\begin{align*}
	\Adv_{\mc{A}}^{G_1} - \Adv_{\mc{A}}^{G_2}
	&\leq \Adv_{DBE}^{SS-CCA}(\lambda) + \Adv_{OTS}^{SUF}(\lambda).
	\end{align*}
This completes our proof.
\end{proof}

\begin{lemma} \label{lem:dbe-adcca-g2-g3}
If the SKE scheme is OMI secure, then no PPT adversary can distinguish
$\tb{G}_2$ from $\tb{G}_3$ with a non-negligible advantage.
\end{lemma}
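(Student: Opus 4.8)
The plan is to exploit that $\tb{G}_2$ and $\tb{G}_3$ differ in exactly one place: the plaintext encrypted inside the challenge component $CT_0^*$. In $\tb{G}_2$ this is the genuine session key $CK$ encrypted under the random SKE key $RK_{SS,0}$, whereas in $\tb{G}_3$ it is a fresh random string $RK_0$ encrypted under the same $RK_{SS,0}$. Since $RK_{SS,0}$ is (after the transition already performed in $\tb{G}_1$) an independent random key with no algebraic tie to the DBE$_{SS}$ encapsulation, a single invocation of the OMI security of SKE from Definition \ref{def:ske-omi-sec} should bridge the two games.

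First I would construct a reduction $\mc{B}$ that plays the OMI experiment against an SKE challenger while simulating the AD-CCA game for $\mc{A}$. Here $\mc{B}$ holds every secret of the DBE$_{AD}$ scheme: it runs $\tb{DBE}_{AD}.\tb{Setup}$ honestly, generates all user key pairs itself, and therefore answers every key generation, key corruption, and decryption query of $\mc{A}$ exactly as the honest challenger would. The crucial observation is that the hidden SKE key chosen by the OMI challenger, which plays the role of $RK_{SS,0}$, is used at only one point in the entire experiment, namely to form $CT_0^*$; decryption queries never require it, because $\tb{DBE}_{AD}.\tb{Decaps}$ re-derives its SKE key from the DBE$_{SS}$ component of the queried ciphertext header rather than from $RK_{SS,0}$.

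Then in the challenge phase $\mc{B}$ would assemble $CH^*$ as in both games: it generates an OTS key pair $(SK^*, VK^*)$, forms $CH_{SS,0}^*$ and $CH_{SS,1}^*$, and produces $CT_1^* = \tb{SKE.Encrypt}(RK_{SS,1}, CK)$ using a random $RK_{SS,1}$ exactly as in $\tb{G}_2$ and $\tb{G}_3$. For $CT_0^*$, $\mc{B}$ picks the genuine session key $CK$ and a fresh random string $RK_0$, submits $M_0^* = CK$ and $M_1^* = RK_0$ (both elements of $\bits^\lambda$, hence of equal length) to the OMI challenger, and embeds the returned challenge ciphertext as $CT_0^*$. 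If the OMI coin is $0$ the view of $\mc{A}$ is identical to $\tb{G}_2$, and if it is $1$ the view is identical to $\tb{G}_3$. Finally $\mc{B}$ chooses the AD-CCA coin $\mu$ itself, hands $(CH^*, CK_\mu^*)$ to $\mc{A}$, and outputs its OMI guess according to whether $\mc{A}$ correctly recovers $\mu$, which yields $|\Adv_{\mc{A}}^{G_2} - \Adv_{\mc{A}}^{G_3}| \le \Adv_{SKE}^{OMI}(\lambda)$ up to a constant factor.

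The step I expect to be the main obstacle is the consistency argument for decryption queries, especially in Query Phase 2, where $\mc{A}$ may submit a ciphertext header that reuses $CH_{SS,0}^*$ while differing elsewhere (the only constraint being $CH \neq CH^*$). I would argue that, since $\mc{B}$ knows all user private keys, it can decapsulate any such query honestly and recover the true encapsulated key $CK_{SS,0}^*$ rather than the swapped $RK_{SS,0}$, so that the reply is well-defined and identical in $\tb{G}_2$ and $\tb{G}_3$ and never exposes the hidden OMI key. Confirming that this honest handling introduces no gap between the simulation and the two real games is the one place where care is required; everything else is a direct syntactic match between $\mc{B}$'s simulation and the game definitions.
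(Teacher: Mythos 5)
Your proposal is correct and follows essentially the same reduction as the paper: the simulator runs all DBE and OTS algorithms honestly, submits the pair $(CK^*, RK_0)$ to the OMI challenger, embeds the returned ciphertext as $CT_0^*$, and converts $\mc{A}$'s guess of $\mu$ into an OMI guess. Your extra remark about decryption queries that reuse $CH_{SS,0}^*$ being answerable honestly (since the simulator holds all user private keys and the hidden OMI key is used nowhere else) is a valid observation that the paper leaves implicit.
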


\begin{proof}
Suppose there exists an adversary $\mc{A}$ that distinguishes $\tb{G}_2$ from
$\tb{G}_3$ with a non-negligible advantage. Let $\mc{C}$ be a challenger of
the SKE scheme. A simulator $\mc{B}$ that breaks the SKE scheme by using
$\mc{A}$ with an input $1^L$ is described as follows:

\vs\noindent \tb{Setup:} In this step, $\mc{B}$ simply runs the normal setup
algorithm.

\svs\noindent \tb{Query Phase 1:} For the key generation, key corruption, and
decryption queries of $\mc{A}$, $\mc{B}$ simply handles these queries by
running the normal algorithms of DBE.

\svs\noindent \tb{Challenge:} $\mc{A}$ submits a challenge set $S^* \subseteq
[L]$, $\mc{B}$ proceeds as follows:
	\begin{enumerate}
	\item It generates $(SK^*, VK^*)$ by running $\tb{OTS.GenKey}(1^\lambda)$.

	\item It defines $S_0^* = \{ 2j - z_j^* \}_{j \in S^*}$ and $S_1^* =
	\{ 2j - (1-z_j^*) \}_{j \in S^*}$.
	It obtains $(CH_{SS,0}^*, CK_{SS,0}^*)$ by running $\tb{DBE}_{SS}.\tb{Encaps}
	(S_0^*, \{ (k, UPK_{SS,k}) \}_{k \in S_0^*}, PP_{SS}, VK^*)$.
	It also obtains $(CH_{SS,1}^*, CK_{SS,1}^*)$ by running $\tb{DBE}_{SS}.
	\lb\tb{Encaps} (S_1^*, \{ (k, UPK_{SS,k}) \}_{k \in S_1^*}, PP_{SS}, VK^*)$.
	It selects a random $RK_{SS,1}$.

	\item It selects a session key $CK^*$. It also selects a random $RK_0$.
	It submits $(CK^*, RK_0)$ to $\mc{C}$ and receives $CT_0^*$.
	It obtains $CT_1^*$ by running $\tb{SKE.Encrypt} (RK_{SS,1}, CK^*)$.

	\item It sets $CM^* = (CH_{SS,0}^*, CH_{SS,1}^*, CT_0^*, CT_1^*, z^*)$
	and calculates $\sigma^*$ by running $\tb{OTS.Sign} (SK^*, CM^*)$.

	\item It sets $CH^* = (CM^*, \sigma^*, VK^*)$, $CK_0^* = CK^*$, and
	$CK_1^* = RK$ by selecting a random $RK$.
	Next, it flips a random coin $\mu \in \bits$ and gives $(CH^*, CK_\mu^*)$
	to $\mc{A}$
	\end{enumerate}

\svs\noindent \tb{Query Phase 2:} For a decryption query, $\mc{B}$ handles
this decryption query as the same as query phase 1.

\svs\noindent \tb{Guess:} $\mc{A}$ outputs a guess $\mu'$. If $\mu = \mu'$,
then $\mc{B}$ outputs 1. Otherwise, it outputs 0.
\end{proof}

\begin{lemma} \label{lem:dbe-adcca-g3-g4}
If the SKE scheme is OMI secure, then no PPT adversary can distinguish
$\tb{G}_3$ from $\tb{G}_4$ with a non-negligible advantage.
\end{lemma}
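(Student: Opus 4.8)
The plan is to prove Lemma~\ref{lem:dbe-adcca-g3-g4} by reduction to the OMI security of the SKE scheme, following exactly the template of Lemma~\ref{lem:dbe-adcca-g2-g3} but applied to the second symmetric ciphertext component $CT_1^*$ rather than $CT_0^*$. Recall that in game $\tb{G}_3$ the component $CT_1^*$ is still an honest encryption of the true session key $CK$ under a random key $RK_{SS,1}$, whereas in $\tb{G}_4$ it is an encryption of a fresh random message $RK_1$ under $RK_{SS,1}$; since the two games differ only in the plaintext encrypted under an \emph{independent} random symmetric key, any distinguishing advantage must translate into an OMI advantage.

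First I would set up a simulator $\mc{B}$ that plays the AD-CCA game with $\mc{A}$ on input $1^L$ while acting as the adversary against the SKE challenger $\mc{C}$. In \tb{Setup} and \tb{Query Phase 1}, $\mc{B}$ holds all the DBE secret material and answers key generation, key corruption, and decryption queries honestly by running the normal algorithms, exactly as in Lemma~\ref{lem:dbe-adcca-g2-g3}. The only subtlety is in the \tb{Challenge} phase: $\mc{B}$ generates $(SK^*, VK^*)$ via $\tb{OTS.GenKey}(1^\lambda)$, forms $S_0^*$ and $S_1^*$ from $S^*$ and the bits $z^*$, and runs $\tb{DBE}_{SS}.\tb{Encaps}$ on both sets to obtain $(CH_{SS,0}^*, CK_{SS,0}^*)$ and $(CH_{SS,1}^*, CK_{SS,1}^*)$. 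Mirroring $\tb{G}_3$, it discards the honest $CK_{SS,0}^*$ and instead selects a random $RK_{SS,0}$, then builds $CT_0^* = \tb{SKE.Encrypt}(RK_{SS,0}, RK_0)$ for a fresh random $RK_0$. For the critical second slot, $\mc{B}$ picks the true session key $CK^*$ and a random $RK_1$, submits the pair $(CK^*, RK_1)$ as its OMI challenge messages to $\mc{C}$, and plants the returned ciphertext as $CT_1^*$. Crucially, because $CT_1^*$ is now encrypted under the SKE challenger's hidden key, $\mc{B}$ never needs to know $CK_{SS,1}^*$ as a symmetric key; the honest $RK_{SS,1}$ used in $\tb{G}_3$ is precisely simulated by the unknown OMI key. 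It then assembles $CM^*$, signs with $SK^*$, sets $CK_0^* = CK^*$, picks random $CK_1^* = RK$, flips $\mu$, and returns $(CH^*, CK_\mu^*)$.

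In \tb{Query Phase 2} the decryption queries are handled as in phase~1, subject to the constraint $CH \neq CH^*$. I would remark that decryption queries never force $\mc{B}$ to decrypt $CT_1^*$ under the unknown OMI key: for any admissible query the attached $VK$ differs from $VK^*$ (otherwise the OTS unforgeability argument already invoked in Lemmas~\ref{lem:dbe-adcca-g0-g1} and~\ref{lem:dbe-adcca-g1-g2} rules out a valid signature on a new $CM$), so $\mc{B}$ answers every query honestly using its own DBE keys and the $\tb{SKE.Decrypt}$ routine on freshly generated symmetric keys it controls. Finally, when $\mc{A}$ outputs $\mu'$, $\mc{B}$ outputs $1$ if $\mu = \mu'$ and $0$ otherwise, forwarding this as its OMI guess. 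When the OMI challenger encrypted $CK^*$, $\mc{B}$ perfectly simulates $\tb{G}_3$; when it encrypted $RK_1$, it perfectly simulates $\tb{G}_4$. Hence $\big| \Adv_{\mc{A}}^{G_3} - \Adv_{\mc{A}}^{G_4} \big| \leq \Adv_{SKE}^{OMI}(\lambda)$.

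The main obstacle, and the point I would treat most carefully, is confirming that the simulation stays consistent even though $\mc{B}$ does not know the symmetric key underlying $CT_1^*$. This matters in two places: the challenge assembly must not require re-encrypting anything under that key (it does not, since $CT_1^*$ is supplied directly by $\mc{C}$), and no decryption query may require decrypting a ciphertext built under it. The second point is exactly where the $VK \neq VK^*$ / OTS argument and the $CH \neq CH^*$ restriction do the work; I expect the clean way to phrase the lemma's hypotheses is to carry the OTS-based abort event implicitly, just as the surrounding lemmas do, so that the reduction is purely to OMI security as stated.
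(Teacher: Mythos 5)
Your reduction is exactly the paper's proof: the paper likewise runs everything honestly, replaces $CK_{SS,0}^*$ by a random $RK_{SS,0}$ to reproduce $\tb{G}_3$'s first slot, submits $(CK^*, RK_1)$ to the OMI challenger, and plants the returned ciphertext as $CT_1^*$, giving $\big|\Adv_{\mc{A}}^{G_3} - \Adv_{\mc{A}}^{G_4}\big| \leq \Adv_{SKE}^{OMI}(\lambda)$. One small correction to your final paragraph: the $VK \neq VK^*$ / OTS argument is neither available under this lemma's hypotheses nor needed, because the decryption oracle in $\tb{G}_3$ and $\tb{G}_4$ always re-derives its symmetric key via $\tb{DBE}_{SS}.\tb{Decaps}$ and never uses the hidden key $RK_{SS,1}$, so $\mc{B}$ can answer every admissible query honestly regardless of the $VK$ it carries.
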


\begin{proof}
The proof of this lemma is almost the same as that of Lemma
\ref{lem:dbe-adcca-g2-g3} except the generation of the challenge ciphertext.
The detailed description of the challenge setup is given as follows:

\svs\noindent \tb{Challenge:} $\mc{A}$ submits a challenge set $S^* \subseteq
[L]$, $\mc{B}$ proceeds as follows:
	\begin{enumerate}
	\item It generates $(SK^*, VK^*)$ by running $\tb{OTS.GenKey}(1^\lambda)$.

	\item It defines $S_0^* = \{ 2j - z_j^* \}_{j \in S^*}$ and $S_1^* =
	\{ 2j - (1-z_j^*) \}_{j \in S^*}$.
	It obtains $(CH_{SS,0}^*, CK_{SS,0}^*)$ by running $\tb{DBE}_{SS}.\tb{Encaps}
	(S_0^*, \{ (k, UPK_{SS,k}) \}_{k \in S_0^*}, PP_{SS}, VK^*)$.
	It also obtains $(CH_{SS,1}^*, CK_{SS,1}^*)$ by running $\tb{DBE}_{SS}.
	\lb\tb{Encaps} (S_1^*, \{ (k, UPK_{SS,k}) \}_{k \in S_1^*}, PP_{SS}, VK^*)$.
	It selects a random $RK_{SS,0}$.

	\item It selects a session key $CK^*$. It also selects random $RK_0$ and
	$RK_1$.
	It obtains $CT_0^*$ by running $\tb{SKE.Encrypt} \lb(RK_{SS,0}, RK_0)$.
	It submits $(CK^*, RK_1)$ to $\mc{C}$ and receives $CT_1^*$.

	\item It sets $CM^* = (CH_{SS,0}^*, CH_{SS,1}^*, CT_0^*, CT_1^*, z^*)$
	and calculates $\sigma^*$ by running $\tb{OTS.Sign} (SK^*, CM^*)$.

	\item It sets $CH^* = (CM^*, \sigma^*, VK^*)$, $CK_0^* = CK^*$, and
	$CK_1^* = RK$ by selecting a random $RK$.
	Next, it flips a random coin $\mu \in \bits$ and gives $(CH^*, CK_\mu^*)$
	to $\mc{A}$
	\end{enumerate}
This completes our proof.
\end{proof}

\section{Conclusion}

In this paper, we proposed an efficient DBE scheme in bilinear groups and
proved its adaptive CCA security under the $q$-Type assumption. Our adaptive
CCA secure DBE scheme has constant size ciphertexts, constant size private
keys, and linear size public keys. The public key verification of our DBE
scheme requires only a constant number of pairing operations and the linear
number of efficient group membership operations. An interesting problem is to
design an efficient and adaptive CCA secure DBE scheme under standard
assumptions weaker than the $q$-Type assumption.

\bibliographystyle{plain}
\bibliography{dbe-for-adcca-security}

\begin{thebibliography}{10}

\bibitem{AbdallaKN07}
Michel Abdalla, Eike Kiltz, and Gregory Neven.
\newblock Generalized key delegation for hierarchical identity-based
  encryption.
\newblock In Joachim Biskup and Javier Lopez, editors, {\em Computer Security -
  ESORICS 2007}, volume 4734 of {\em Lecture Notes in Computer Science}, pages
  139--154. Springer, 2007.

\bibitem{AgrawalWY20}
Shweta Agrawal, Daniel Wichs, and Shota Yamada.
\newblock Optimal broadcast encryption from {LWE} and pairings in the standard
  model.
\newblock In Rafael Pass and Krzysztof Pietrzak, editors, {\em Theory of
  Cryptography - {TCC} 2020}, volume 12550 of {\em Lecture Notes in Computer
  Science}, pages 149--178. Springer, 2020.

\bibitem{AgrawalY20}
Shweta Agrawal and Shota Yamada.
\newblock Optimal broadcast encryption from pairings and {LWE}.
\newblock In Anne Canteaut and Yuval Ishai, editors, {\em Advances in
  Cryptology - {EUROCRYPT} 2020}, volume 12105 of {\em Lecture Notes in
  Computer Science}, pages 13--43. Springer, 2020.

\bibitem{BonehB04s}
Dan Boneh and Xavier Boyen.
\newblock Short signatures without random oracles.
\newblock In Christian Cachin and Jan Camenisch, editors, {\em Advances in
  Cryptology - EUROCRYPT 2004}, volume 3027 of {\em Lecture Notes in Computer
  Science}, pages 56--73. Springer, 2004.

\bibitem{BonehBG05}
Dan Boneh, Xavier Boyen, and Eu-Jin Goh.
\newblock Hierarchical identity based encryption with constant size ciphertext.
\newblock In Ronald Cramer, editor, {\em Advances in Cryptology - EUROCRYPT
  2005}, volume 3494 of {\em Lecture Notes in Computer Science}, pages
  440--456. Springer, 2005.

\bibitem{BonehCHK07}
Dan Boneh, Ran Canetti, Shai Halevi, and Jonathan Katz.
\newblock Chosen-ciphertext security from identity-based encryption.
\newblock {\em {SIAM} J. Comput.}, 36(5):1301--1328, 2007.

\bibitem{BonehGW05}
Dan Boneh, Craig Gentry, and Brent Waters.
\newblock Collusion resistant broadcast encryption with short ciphertexts and
  private keys.
\newblock In Victor Shoup, editor, {\em Advances in Cryptology - CRYPTO 2005},
  volume 3621 of {\em Lecture Notes in Computer Science}, pages 258--275.
  Springer, 2005.

\bibitem{BonehLS04}
Dan Boneh, Ben Lynn, and Hovav Shacham.
\newblock Short signatures from the weil pairing.
\newblock {\em J. Cryptol.}, 17(4):297--319, 2004.

\bibitem{BonehWZ14}
Dan Boneh, Brent Waters, and Mark Zhandry.
\newblock Low overhead broadcast encryption from multilinear maps.
\newblock In Juan~A. Garay and Rosario Gennaro, editors, {\em Advances in
  Cryptology - CRYPTO 2014}, volume 8616 of {\em Lecture Notes in Computer
  Science}, pages 206--223. Springer, 2014.

\bibitem{BonehZ14m}
Dan Boneh and Mark Zhandry.
\newblock Multiparty key exchange, efficient traitor tracing, and more from
  indistinguishability obfuscation.
\newblock In Juan~A. Garay and Rosario Gennaro, editors, {\em Advances in
  Cryptology - {CRYPTO} 2014}, volume 8616 of {\em Lecture Notes in Computer
  Science}, pages 480--499. Springer, 2014.

\bibitem{BoyenMW05}
Xavier Boyen, Qixiang Mei, and Brent Waters.
\newblock Direct chosen ciphertext security from identity-based techniques.
\newblock In Vijay Atluri, Catherine Meadows, and Ari Juels, editors, {\em ACM
  Conference on Computer and Communications Security, CCS 2005}, pages
  320--329. ACM, 2005.

\bibitem{CamenischHP12}
Jan Camenisch, Susan Hohenberger, and Michael~{\O}stergaard Pedersen.
\newblock Batch verification of short signatures.
\newblock {\em J. Cryptology}, 25(4):723--747, 2012.

\bibitem{CanettiHK04}
Ran Canetti, Shai Halevi, and Jonathan Katz.
\newblock Chosen-ciphertext security from identity-based encryption.
\newblock In Christian Cachin and Jan Camenisch, editors, {\em Advances in
  Cryptology - EUROCRYPT 2004}, volume 3027 of {\em Lecture Notes in Computer
  Science}, pages 207--222. Springer, 2004.

\bibitem{ChampionW24}
Jeffrey Champion and David~J. Wu.
\newblock Distributed broadcast encryption from lattices.
\newblock In Elette Boyle and Mohammad Mahmoody, editors, {\em Theory of
  Cryptography - {TCC} 2024}, volume 15366 of {\em Lecture Notes in Computer
  Science}, pages 156--189. Springer, 2024.

\bibitem{CramerS02}
Ronald Cramer and Victor Shoup.
\newblock Universal hash proofs and a paradigm for adaptive chosen ciphertext
  secure public-key encryption.
\newblock In Lars~R. Knudsen, editor, {\em Advances in Cryptology - {EUROCRYPT}
  2002}, volume 2332 of {\em Lecture Notes in Computer Science}, pages 45--64.
  Springer, 2002.

\bibitem{CramerS03}
Ronald Cramer and Victor Shoup.
\newblock Design and analysis of practical public-key encryption schemes secure
  against adaptive chosen ciphertext attack.
\newblock {\em {SIAM} J. Comput.}, 33(1):167--226, 2003.

\bibitem{Delerablee07}
C{\'{e}}cile Delerabl{\'{e}}e.
\newblock Identity-based broadcast encryption with constant size ciphertexts
  and private keys.
\newblock In Kaoru Kurosawa, editor, {\em Advances in Cryptology - {ASIACRYPT}
  2007}, volume 4833 of {\em Lecture Notes in Computer Science}, pages
  200--215. Springer, 2007.

\bibitem{DodisK05}
Yevgeniy Dodis and Jonathan Katz.
\newblock Chosen-ciphertext security of multiple encryption.
\newblock In Joe Kilian, editor, {\em Theory of Cryptography - {TCC} 2005},
  volume 3378 of {\em Lecture Notes in Computer Science}, pages 188--209.
  Springer, 2005.

\bibitem{FiatN93}
Amos Fiat and Moni Naor.
\newblock Broadcast encryption.
\newblock In Douglas~R. Stinson, editor, {\em Advances in Cryptology - CRYPTO
  '93}, volume 773 of {\em Lecture Notes in Computer Science}, pages 480--491.
  Springer, 1993.

\bibitem{GargLWW23}
Rachit Garg, George Lu, Brent Waters, and David~J. Wu.
\newblock Realizing flexible broadcast encryption: How to broadcast to a
  public-key directory.
\newblock In Weizhi Meng, Christian~Damsgaard Jensen, Cas Cremers, and Engin
  Kirda, editors, {\em {ACM} Conference on Computer and Communications
  Security, {CCS} 2023}, pages 1093--1107. {ACM}, 2023.

\bibitem{GayKW18}
Romain Gay, Lucas Kowalczyk, and Hoeteck Wee.
\newblock Tight adaptively secure broadcast encryption with short ciphertexts
  and keys.
\newblock In Dario Catalano and Roberto~De Prisco, editors, {\em Security and
  Cryptography for Networks - {SCN} 2018}, volume 11035 of {\em Lecture Notes
  in Computer Science}, pages 123--139. Springer, 2018.

\bibitem{GentryW09}
Craig Gentry and Brent Waters.
\newblock Adaptive security in broadcast encryption systems (with short
  ciphertexts).
\newblock In Antoine Joux, editor, {\em Advances in Cryptology - EUROCRYPT
  2009}, volume 5479 of {\em Lecture Notes in Computer Science}, pages
  171--188. Springer, 2009.

\bibitem{Kiltz06}
Eike Kiltz.
\newblock Chosen-ciphertext security from tag-based encryption.
\newblock In Shai Halevi and Tal Rabin, editors, {\em Theory of Cryptography -
  {TCC} 2006}, volume 3876 of {\em Lecture Notes in Computer Science}, pages
  581--600. Springer, 2006.

\bibitem{KolonelosMW23}
Dimitris Kolonelos, Giulio Malavolta, and Hoeteck Wee.
\newblock Distributed broadcast encryption from bilinear groups.
\newblock In Jian Guo and Ron Steinfeld, editors, {\em Advances in Cryptology -
  {ASIACRYPT} 2023}, volume 14442 of {\em Lecture Notes in Computer Science},
  pages 407--441. Springer, 2023.

\bibitem{Lee25}
Kwangsu Lee.
\newblock Adaptively secure distributed broadcast encryption with linear-size
  public parameters.
\newblock arXiv:2505.17527, 2025.
\newblock \url{https://arxiv.org/abs/2505.17527}.

\bibitem{NaorNL01}
Dalit Naor, Moni Naor, and Jeffery Lotspiech.
\newblock Revocation and tracing schemes for stateless receivers.
\newblock In Joe Kilian, editor, {\em Advances in Cryptology - CRYPTO 2001},
  volume 2139 of {\em Lecture Notes in Computer Science}, pages 41--62.
  Springer, 2001.

\bibitem{RackoffS91}
Charles Rackoff and Daniel~R. Simon.
\newblock Non-interactive zero-knowledge proof of knowledge and chosen
  ciphertext attack.
\newblock In Joan Feigenbaum, editor, {\em Advances in Cryptology - CRYPTO
  '91}, volume 576 of {\em Lecture Notes in Computer Science}, pages 433--444.
  Springer, 1991.

\bibitem{Shoup01}
Victor Shoup.
\newblock A proposal for an {ISO} standard for public key encryption.
\newblock Cryptology ePrint Archive, Report 2001/112, 2001.
\newblock \url{http://eprint.iacr.org/2001/112}.

\bibitem{Waters09}
Brent Waters.
\newblock Dual system encryption: Realizing fully secure {IBE} and {HIBE} under
  simple assumptions.
\newblock In Shai Halevi, editor, {\em Advances in Cryptology - CRYPTO 2009},
  volume 5677 of {\em Lecture Notes in Computer Science}, pages 619--636.
  Springer, 2009.

\bibitem{Wee22}
Hoeteck Wee.
\newblock Optimal broadcast encryption and {CP-ABE} from evasive lattice
  assumptions.
\newblock In Orr Dunkelman and Stefan Dziembowski, editors, {\em Advances in
  Cryptology - {EUROCRYPT} 2022}, volume 13276 of {\em Lecture Notes in
  Computer Science}, pages 217--241. Springer, 2022.

\bibitem{WuQZD10}
Qianhong Wu, Bo~Qin, Lei Zhang, and Josep Domingo{-}Ferrer.
\newblock Ad hoc broadcast encryption.
\newblock In Ehab Al{-}Shaer, Angelos~D. Keromytis, and Vitaly Shmatikov,
  editors, {\em {ACM} Conference on Computer and Communications Security, {CCS}
  2010}, pages 741--743. {ACM}, 2010.

\end{thebibliography}

\end{document}